\newcommand{\lang}{\langle\langle}
\newcommand{\rang}{\rangle\rangle}
\newcommand{\zed}{\mathcal{Z}}
\newcommand{\one}{\mathbbm{1}}
\theoremstyle{definition} 
\newtheorem{define}{Definition}[section]
\theoremstyle{definition} 
\newtheorem{prop}[define]{Proposition} 
\theoremstyle{definition}
\newtheorem{lemma}[define]{Lemma}
\theoremstyle{definition}
\newtheorem{cor}[define]{Corollary}
\theoremstyle{definition}
\newtheorem{crit}[define]{Criterion}
\theoremstyle{definition}
\newtheorem{theorem}[define]{Theorem}
\begin{document}

\title{
\normalsize \hfill UWThPh-2010-11 \\*[16mm]
\LARGE On the finite subgroups of U(3) \\
of order smaller than 512
}

\author{
Patrick Otto Ludl\thanks{E-mail: patrick.ludl@univie.ac.at}
\\*[3mm]
\small University of Vienna, Faculty of Physics \\
\small Boltzmanngasse 5, A--1090 Vienna, Austria
}

\date{20 December 2010}

\maketitle

\begin{abstract}
We use the SmallGroups Library to find the finite subgroups of $U(3)$ of order smaller than 512 which possess a faithful three-dimensional irreducible representation. From the resulting list of groups we extract those groups that can not be written as direct products with cyclic groups. These groups are important building blocks for models based on finite subgroups of $U(3)$.
\\
All resulting finite subgroups of $SU(3)$ can be identified using the well known list of finite subgroups of $SU(3)$ derived by Miller, Blichfeldt and Dickson at the beginning of the 20\textsuperscript{th} century. Furthermore we prove a theorem which allows to construct infinite series of finite subgroups of $U(3)$ from a special type of finite subgroups of $U(3)$. This theorem is used to construct some new series of finite subgroups of $U(3)$. The first members of these series can be found in the derived list of finite subgroups of $U(3)$ of order smaller than 512.
\\
In the last part of this work we analyse some interesting finite subgroups of $U(3)$, especially the group $S_4(2)\cong A_4\rtimes\zed_4$, which is closely related to the important $SU(3)$-subgroup $S_4$.
\end{abstract}

\newpage

\section{Introduction}
The problem of lepton masses and mixing (more generally the fermion mass and mixing problem) is one of the most interesting current research topics of particle physics and withstood a solution for decades. Invariance of the Lagrangian under finite family symmetry groups constitutes an interesting possibility, at least for a partial solution of this problem.
\medskip
\\
In 2002 Harrison, Perkins and Scott suggested the tribimaximal lepton
mixing matrix \cite{HPS}
	\begin{equation}
	U_{\mathrm{TBM}}=\left(\begin{matrix}
		 \sqrt{\frac{2}{3}} & \frac{1}{\sqrt{3}} & 0 \\
		 -\frac{1}{\sqrt{6}} & \frac{1}{\sqrt{3}} &
                 \frac{1}{\sqrt{2}} \\ 
		 \frac{1}{\sqrt{6}} & -\frac{1}{\sqrt{3}} & \frac{1}{\sqrt{2}}
	\end{matrix}\right),
	\end{equation}
which is in agreement with the current
experimental bounds on the lepton mixing matrix \cite{results}. The nice
appearance of this matrix induced the idea of an underlying symmetry
in the Lagrangian of the lepton and scalar sector. A large number of
models involving especially discrete symmetries followed. For a review
of today's state of the art in model building see \cite{models}.
\medskip
\\
Due to the fact that there are three known families of leptons (and
quarks) models based on finite subgroups of $U(3)$ have become very
popular, so a systematic analysis of the finite subgroups of $U(3)$ would provide an invaluably helpful tool for model building with finite family symmetry groups.
\\
Unfortunately the finite subgroups of $U(3)$ have, to our knowledge, not been classified by now, while the finite subgroups of $SU(3)$ have been classified already at the beginning of the 20\textsuperscript{th} century by Miller, Blichfeldt and Dickson \cite{miller}.
\medskip
\\
The idea of a systematic analysis of finite subgroups of
$SU(3)$ in the context of particle physics is not new
\cite{finitesubgroups-su3-fairbairn,BLW1,BLW2,Fairbairn2}, and research on the application of finite
groups in particle physics (especially finite subgroups of $SU(3)$)
continues unabated \cite{Frampton,D3n^2,SimpleFinite,D6n^2,Dihedral,DTPOL,Zwicky,Ishimori:2010au}.

\section{The finite subgroups of U(3) of order smaller than 512}

In this work we want to concentrate on the finite subgroups of $U(3)$. We will use both the analytical tools of group theory as well as the modern tool of computer algebra to investigate the finite subgroups of $U(3)$ up to order 511.

\subsection{Classification of finite subgroups of $U(3)$}

At first let us consider the different types of finite subgroups of $U(3)$ we will distinguish. Knowing that every representation of a finite group is equivalent to a unitary representation we find:
\begin{quote}
A finite group $G$ is isomorphic to a finite subgroup of $U(3)$ if and only if it possesses a faithful three-dimensional representation.
\end{quote}
Thus, if we would search for all finite groups which fulfil the above properties, we would obtain all finite subgroups of $U(3)$, especially we would obtain all finite subgroups of $U(1)$ and $U(2)$ too. When we usually speak of ``finite subgroups of $U(3)$'' we primarily mean those finite subgroups of $U(3)$ which are not finite subgroups of $U(1)$ and $U(2)$. 
\\
At this point it is important to notice that the possession of a faithful 3-dimensional \textit{irreducible} representation is sufficient but not necessary for a group to be a finite subgroup of $U(3)$\footnote{The author wants to thank K.M.~Parattu and A.~Wingerter for pointing this out in their paper \cite{parattu-wingerter}.}. In fact there are many finite subgroups of $U(3)$ which possess a faithful 3-dimensional \textit{reducible} representation but do not possess any faithful 1- or 2-dimensional representations. According to \cite{parattu-wingerter} these groups correspond to finite subgroups of $U(2)\times U(1)$. 
\\
Though doing so we will miss the $U(2)\times U(1)$-subgroups mentioned above, we will in this work specialise to the finite subgroups of $U(3)$ which possess a faithful 3-dimensional irreducible representation. Among these groups we can differentiate
	\begin{itemize}
	 \item groups which have a faithful irreducible representation of determinant\footnote{Let $D:G\rightarrow D(G)$ be a representation of a finite group $G$. We say $D$ has determinant 1, if all matrices in $D(G)$ have determinant 1.} 1 ($SU(3)$-subgroups) and
	 \item groups which don't have a faithful irreducible representation of determinant 1.
	\end{itemize}
An analysis similar to the one performed in this work can be found in \cite{parattu-wingerter}, where all finite groups up to order 100 are listed. The list given in \cite{parattu-wingerter} especially indicates which groups of order up to 100 possess 3-dimensional faithful representations (both reducible and irreducible). 

\subsection{The computer algebra system GAP and the SmallGroups Library}
The task of classifying all finite groups which have a faithful three-dimensional irreducible representation would need an analysis using the techniques described in \cite{miller} applied to $U(3)$. Instead of doing that, we want to get a first impression of the finite subgroups of $U(3)$ by searching for $U(3)$-subgroups of small orders with the help of the computer algebra system GAP \cite{GAP}. Through the SmallGroups package \cite{SmallGroups} GAP allows access to the SmallGroups Library \cite{SmallGroups,SmallGroupsLibrary} which contains, among other finite groups, all finite groups up to order 2000, except for the groups of order 1024, up to isomorphism.
\medskip
\\
The way finite groups are labeled in the SmallGroups Library is the following: Let there be $n$ non-isomorphic groups of order $g$, then these $n$ groups are labeled by their order $g$ and a number $j\in\{1,...,n\}$:
	\begin{quote}
	$\mbox{\textlbrackdbl}g,j\mbox{\textrbrackdbl}$ denotes the $j-$th finite group of order $g$ $(j\in\{1,...,n\})$ listed in the SmallGroups Library.
	\end{quote}
The way in which the $n$ non-isomorphic groups of a given order $g$ are arranged in the SmallGroups Library depends on $g$. For a detailed description of the SmallGroups Library we refer the reader to chapter 48.7 of the GAP reference manual \cite{GAP}. For our purpose we only need to know that $\mbox{\textlbrackdbl}g,j\mbox{\textrbrackdbl}$ is not isomorphic to $\mbox{\textlbrackdbl}g,k\mbox{\textrbrackdbl}$ if $j\neq k$. To all ``small groups'' $\mbox{\textlbrackdbl}g,j\mbox{\textrbrackdbl}$ listed in this paper the reader can find the common name or a list of generators in tables \ref{SU3-subgroups} and \ref{U3-subgroups}, respectively.
\\
Let us now get a picture of the number of finite groups of some given order. Figure \ref{numbergroups} shows the total number $N(g)$ of non-Abelian groups of order $\le g$.
	\begin{figure}[h]
		\begin{center}
		\includegraphics[scale=0.8, keepaspectratio=true]{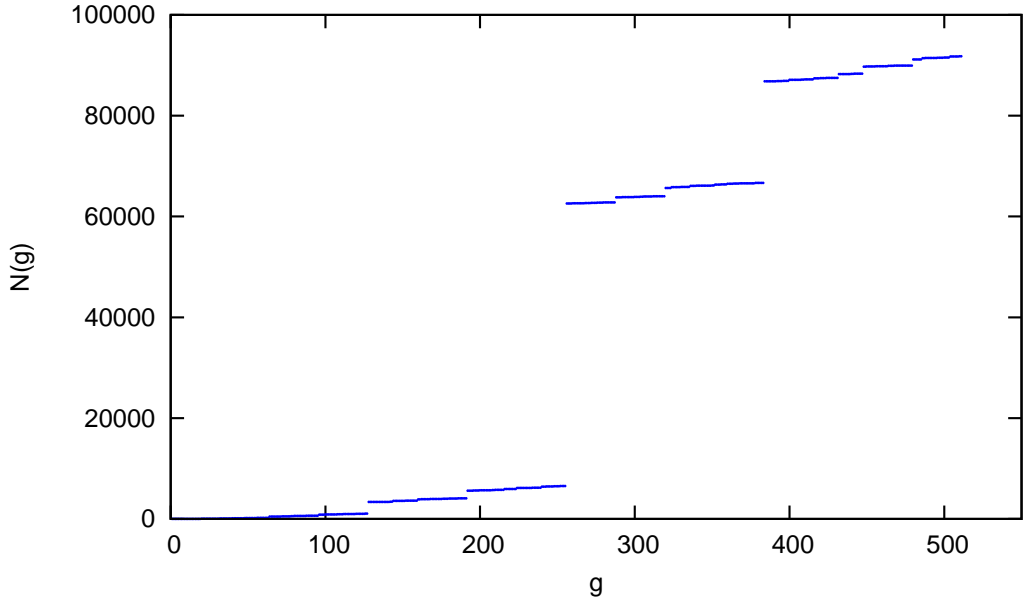}
		\caption{Total number $N(g)$ of non-Abelian groups up to order $g$.}\label{numbergroups}
		\end{center}
		\end{figure}
\hspace{0mm}
\\
From figure \ref{numbergroups} one can immediately deduce that the number of finite groups of order $g$ is usually very high if $g$ contains high powers of 2. Therefore there are high ``jumps'' in $N(g)$ at
	\begin{displaymath}
	g=2^8=256\quad\mbox{and}\quad g=3\times 2^7=384.
	\end{displaymath}
Indeed a much larger ``jump'' occurs at $g=512$: $N(511)=91774$, while there are $10494213$ groups of order $512$ \cite{SmallGroups} of which only $30$ are Abelian\footnote{The number of non-isomorphic Abelian groups of a given order can be calculated explicitly. See for example the article ``Abelian Group'' in \cite{weisstein}.}. If we want to analyse groups with faithful 3-dimensional irreducible representations only, we don't need to consider groups of order 512 due to the following theorem:
\begin{theorem}\label{dimensiontheorem}
Let $D$ be an irreducible representation of a finite group $G$, then the dimension $\mathrm{dim}(D)$ of $D$ is a divisor of the order $\mathrm{ord}(G)$ of $G$.
\end{theorem}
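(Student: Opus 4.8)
The plan is to use the arithmetic of algebraic integers together with Schur's lemma and the orthogonality relations for characters. Write $n=\dim(D)$, let $\chi$ denote the character of $D$, and put $|G|=\mathrm{ord}(G)$. I would first recall the elementary facts I intend to lean on: a complex number is an \emph{algebraic integer} if it is a root of a monic polynomial with integer coefficients; the algebraic integers form a ring; and a rational number that is an algebraic integer is an ordinary integer. Since every $g\in G$ has finite order, the eigenvalues of the unitary matrix $D(g)$ are roots of unity, so $\chi(g)$, being their sum, is an algebraic integer.

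Next I would introduce the central object, the class sums. Let $C_1,\dots,C_k$ be the conjugacy classes of $G$, choose representatives $g_i\in C_i$, and set $S_i=\sum_{g\in C_i}D(g)$. Each $S_i$ commutes with every $D(h)$, so because $D$ is irreducible Schur's lemma forces $S_i=\omega_i\one$ for some scalar $\omega_i$; taking traces gives $\omega_i=|C_i|\chi(g_i)/n$. The key claim, which I expect to be the main obstacle, is that every $\omega_i$ is an algebraic integer. To establish it I would pass to the group algebra, where the class sums $\hat C_i=\sum_{g\in C_i}g$ multiply as $\hat C_i\hat C_j=\sum_l a_{ijl}\hat C_l$ with \emph{non-negative integer} structure constants $a_{ijl}$ (they count factorizations of a fixed element of $C_l$ into a product of one element from $C_i$ and one from $C_j$). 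Applying $D$ and comparing the resulting scalars yields $\omega_i\omega_j=\sum_l a_{ijl}\omega_l$. Fixing $i$, this says that the column vector $(\omega_1,\dots,\omega_k)^{\mathrm{T}}$ is an eigenvector, with eigenvalue $\omega_i$, of the integer matrix $M_i$ whose $(j,l)$ entry is $a_{ijl}$; hence each $\omega_i$ is a root of the monic integer characteristic polynomial of $M_i$ and therefore an algebraic integer.

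Finally I would invoke the first orthogonality relation $\sum_{g\in G}\chi(g)\overline{\chi(g)}=|G|$, collected over conjugacy classes as
\begin{equation*}
|G|=\sum_{i=1}^{k}|C_i|\,\chi(g_i)\,\overline{\chi(g_i)}.
\end{equation*}
Dividing by $n$ and substituting $\omega_i=|C_i|\chi(g_i)/n$ gives
\begin{equation*}
\frac{|G|}{n}=\sum_{i=1}^{k}\omega_i\,\overline{\chi(g_i)}.
\end{equation*}
The right-hand side is a finite sum of products of algebraic integers (the $\omega_i$ and the $\overline{\chi(g_i)}$), hence an algebraic integer, while the left-hand side is rational. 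A rational algebraic integer is an ordinary integer, so $|G|/n\in\mathbb{Z}$, which is precisely the assertion that $\dim(D)=n$ divides $\mathrm{ord}(G)=|G|$. Every ingredient except the integrality of the $\omega_i$ is routine; that step, via the structure-constant argument, is the genuine content of the proof.
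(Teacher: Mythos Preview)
Your argument is correct and is precisely the classical algebraic-integer proof (class sums act as scalars by Schur, their eigenvalues are algebraic integers via the integer structure constants of the center of the group algebra, and then orthogonality yields $|G|/n$ as a rational algebraic integer). The paper does not supply its own proof but simply refers to the textbooks of Speiser and Hall, where exactly this argument appears, so your proposal matches the intended proof.
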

\hspace{0mm}
\\
The proof of this theorem can be found in textbooks on finite group theory, see for example \cite{speiser} p. 176f. or \cite{hall} p. 288f. Note that theorem \ref{dimensiontheorem} tells us that the order of any group which possesses a 3-dimensional irreducible representation must be divisible by 3. This implies that the groups of order $512$ do not possess 3-dimensional irreducible representations (but there could be groups of order 512 which possess faithful 3-dimensional reducible representations).
\\
In this work we will analyse all groups of order up to 511. From tables \ref{SU3-subgroups} and \ref{U3-subgroups}, which show our results, one can find that the orders of all groups we have found are indeed divisible by 3.

\subsection{Extraction of finite subgroups of $U(3)$ from the SmallGroups Library}

Using GAP the determination of the finite subgroups of $U(3)$ from the SmallGroups Library is not difficult. GAP offers the opportunity to calculate the character tables\footnote{We use the GAP command \texttt{CharacterTable}(.) to calculate the character table of a group.} as well as all irreducible representations of a given ``small group'' $\mbox{\textlbrackdbl}g,j\mbox{\textrbrackdbl}$. Using criterion \ref{criterion1} one can immediately deduce the dimensions of the faithful irreducible representations of a ``small group'' using its character table. If the analysed group has a three-dimensional faithful irreducible representation it is a finite subgroup of $U(3)$\footnote{Please note that this is sufficient, but not necessary \cite{parattu-wingerter}. Here we specialise onto the finite subgroups of $U(3)$ which possess a faithful 3-dimensional irreducible representation.}. By explicit construction of the irreducible representations\footnote{\label{footnote6}We use the GAP command \texttt{IrreducibleRepresentations}(.) to calculate the irreducible representations of a group. Since the labeling of the irreducible representations computed with \texttt{IrreducibleRepresentations}(.) does not necessarily agree with the labeling of \texttt{CharacterTable}(.), we use the commands \texttt{Image}(.) and \texttt{Order}(.) to find the faithful irreducible representations of the group under consideration.} one can determine the $U(3)$-subgroups which have a faithful three-dimensional irreducible representation of determinant 1 ($SU(3)$-subgroups).

\begin{crit}\label{criterion1}
Let $D$ be a $d$-dimensional representation of a finite group $G$. Then $D$ is non-faithful if and only if $D$ has more than one character $d$ in the character table.
\end{crit}
\hspace{0mm}
\\
The proof of criterion \ref{criterion1} can be found in appendix \ref{appendixA}.
\bigskip
\\
Let us, in this paper, choose the following convention: Let $G$ be a finite group. We say that ``$G$ can not be written as a direct product with a cyclic group'' if there does not exist a group $F$ and an $m>1$ such that
	\begin{equation}
	G\cong F\times\zed_m.
	\end{equation}
Before we list the results let us finally divide the obtained groups into another two sets, namely
	\begin{itemize}
	 \item groups that can be written as direct products with cyclic groups and
	 \item groups that can not be written as direct products with cyclic groups.
	\end{itemize}
How can we determine whether a ``small group'' can be written as a direct product with a cyclic group? The GAP command \texttt{StructureDescription}(.) gives the basic structure of a group, especially it tells us whether a group can be written as a direct product with a cyclic group. Let us clarify this with two examples:
	
	\begin{quote}
	\texttt{gap\textgreater StructureDescription(SmallGroup([12,3]));\\
	"A4" \\
	gap\textgreater StructureDescription(SmallGroup([24,13]));\\
	"C2 x A4"}
	\end{quote}
So GAP tells us that the group $\mbox{\textlbrackdbl}12,3\mbox{\textrbrackdbl}$ is isomorphic to $A_4$, and that $\mbox{\textlbrackdbl}24,13\mbox{\textrbrackdbl}$ is isomorphic to $\mathcal{Z}_2\times A_4$, i.e. it is a direct product with a cyclic group.
\medskip
\\
How are the groups that can be written as direct products with cyclic groups related to the groups which can not be written as direct products with cyclic groups? The answer is provided by theorem \ref{theorem1}.

\begin{theorem}\label{theorem1}
Let $G$ be a finite group with an $m$-dimensional faithful irreducible representation $D$. Let $C$ be the center of $G$, $\mathrm{ord}(C)=c$ and let $\mathrm{gcd}(n,c)$ be the greatest common divisor of $n,c\in \mathbbm{N}\backslash\{0\}$.
\medskip
\\
Then $\mathcal{Z}_n\times G$ has a faithful $m$-dimensional irreducible representation if and only if $\mathrm{gcd}(n,c)=1$.
\end{theorem}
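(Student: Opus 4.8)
The plan is to exploit the fact that the complex irreducible representations of a direct product $\zed_n \times G$ are exactly the (outer) tensor products $\chi \otimes D'$, where $\chi$ runs over the $n$ one-dimensional characters of $\zed_n$ and $D'$ over the irreducible representations of $G$, acting as $(\chi\otimes D')(a,g) = \chi(a)\,D'(g)$. Since $\dim(\chi \otimes D') = \dim(D')$, the $m$-dimensional irreducible representations of $\zed_n \times G$ are precisely those with $D'$ an $m$-dimensional irreducible representation of $G$. Thus the entire question reduces to deciding when such a product can be made faithful.

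First I would record two consequences of faithfulness of $\rho = \chi \otimes D'$. Evaluating the kernel on elements of the form $(a,e)$ forces $\chi$ to be a faithful (order-$n$) character of $\zed_n$, and evaluating on elements $(e,g)$ forces $D'$ to be a faithful representation of $G$. Next, by Schur's lemma an irreducible $D'$ sends each central element $z \in C$ to a scalar matrix $\omega(z)\,\one$; if $D'$ is faithful then $\omega \colon C \to \mathbb{C}^{\times}$ is an injective homomorphism, so $\omega(C)$ is exactly the group of $c$-th roots of unity, and moreover $D'(g)$ is scalar if and only if $g \in C$.

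The heart of the argument is an exact count of $\ker(\rho)$. A pair $(a,g)$ lies in the kernel iff $D'(g)=\chi(a)^{-1}\,\one$ is scalar, which by the previous step forces $g\in C$ and then $\chi(a)\,\omega(g)=1$. Hence $\ker(\rho)$ coincides with the kernel of the homomorphism $\phi\colon \zed_n\times C\to\mathbb{C}^{\times}$, $\phi(a,g)=\chi(a)\,\omega(g)$, whose image is the product of the groups of $n$-th and $c$-th roots of unity, i.e. the group of $\mathrm{lcm}(n,c)$-th roots of unity. By the first isomorphism theorem $|\ker(\rho)| = nc/\mathrm{lcm}(n,c) = \gcd(n,c)$, so $\rho$ is faithful precisely when $\gcd(n,c)=1$.

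It then remains only to assemble the two directions. For the ``if'' direction, given $\gcd(n,c)=1$ I would take $\chi$ to be any faithful character of $\zed_n$ and $D'=D$ the given faithful $m$-dimensional irreducible representation; the count yields a trivial kernel, producing the desired faithful $m$-dimensional irreducible representation of $\zed_n\times G$. For the ``only if'' direction, any faithful $m$-dimensional irreducible representation must be of the form $\chi\otimes D'$ with both factors faithful, whence the count forces $\gcd(n,c)=1$. The only real subtlety, and the step I would be most careful about, is the central scalar analysis: justifying that $\omega$ is injective with image exactly the $c$-th roots of unity, and that non-central $g$ never yield scalar matrices, since the whole kernel count rests on this.
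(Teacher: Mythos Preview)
Your proof is correct and in fact more complete than the paper's own argument. The two approaches differ in emphasis.

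The paper proceeds constructively: it fixes the specific map $\mathcal{D}(a^k,g)=a^k D(g)$ (with $a=e^{2\pi i/n}$), appeals to the elementary fact $\zed_n\cap\zed_c=\{e\}\Leftrightarrow\gcd(n,c)=1$ inside $U(1)$ to assert faithfulness, and then verifies irreducibility by the character inner product $(\chi_{\mathcal{D}},\chi_{\mathcal{D}})_{\zed_n\times G}=1$. This is quick and hands-on for the ``if'' direction, but as written it only analyses this one particular representation; the ``only if'' direction (that \emph{no} $m$-dimensional irreducible can be faithful when $\gcd(n,c)>1$) is left implicit.

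Your route is more structural: you start from the classification of the irreducibles of $\zed_n\times G$ as $\chi\otimes D'$, reduce faithfulness to faithfulness of both factors, and then do an exact kernel count via Schur's lemma, obtaining $|\ker(\chi\otimes D')|=\gcd(n,c)$ for any faithful $\chi$ and faithful irreducible $D'$. This handles both directions uniformly and makes the ``only if'' direction genuinely rigorous. The step you flag as the one to be careful about---that $\omega$ is injective with image exactly $\mu_c$ and that $D'(g)$ is scalar only for central $g$---is fine: faithfulness of $D'$ plus Schur gives the injection $C\hookrightarrow\mathbb{C}^\times$, forcing $C$ cyclic of order $c$; and if $D'(g)$ is scalar it commutes with everything, so by faithfulness $g$ is central. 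Your kernel computation $|\ker\phi|=nc/\mathrm{lcm}(n,c)=\gcd(n,c)$ is the clean quantitative version of the paper's $\zed_n\cap\zed_c$ observation.
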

\hspace{0mm}
\\
The proof of this theorem can be found in appendix \ref{appendixB}. Theorem \ref{theorem1} implies that we can construct all finite groups which have a faithful three-dimensional irreducible representation from all finite groups which have a faithful three-dimensional irreducible representation and can not be written as direct products with cyclic groups.
\medskip
\\
Let us consider the following examples:
	\begin{enumerate}
	 \item The group $A_4\cong \mbox{\textlbrackdbl}12,3\mbox{\textrbrackdbl}$ has center $C=\{e\}.\Rightarrow c=\mathrm{ord}(C)=1$, thus $n\in\mathbbm{N}\backslash\{0\}$ and $c=1$ have no common divisor $d\neq 1$. Therefore all direct products
		\begin{displaymath}
		\mathcal{Z}_n\times A_4,\quad n\in\mathbbm{N}\backslash\{0,1\}
		\end{displaymath}
	have three-dimensional faithful irreducible representations. Among these direct products only $\mathcal{Z}_3\times A_4$ will have a faithful three-dimensional irreducible representation of determinant 1 ($\mathrm{det}(\omega^k \mathbbm{1}_3)=1,\enspace \omega=e^{\frac{2\pi i}{3}},\enspace k=0,1,2$).
	 \item The group $\Delta(27)\cong \mbox{\textlbrackdbl}27,3\mbox{\textrbrackdbl}$ has center $C\cong \mathcal{Z}_3.\Rightarrow c=\mathrm{ord}(C)=3$, thus $n\in\mathbbm{N}\backslash\{3k|k\in\mathbbm{N}\}$ and $c=3$ have no common divisor $d\neq 1$. Therefore all direct products
		\begin{displaymath}
		\mathcal{Z}_n\times \Delta(27),\quad n\in\mathbbm{N}\backslash(\{3k|k\in\mathbbm{N}\}\cup\{1\})
		\end{displaymath}
	have faithful three-dimensional irreducible representations. None of these groups has a three-dimensional faithful irreducible representation of determinant 1.
	\end{enumerate}
In the results we will only list groups that can not be written as direct products with cyclic groups. From these groups all other groups can be constructed using theorem \ref{theorem1}. The results obtained from the SmallGroups Library are in perfect agreement with theorem \ref{theorem1}.

\subsection{Results}

\subsubsection{Generators}
In tables \ref{SU3generators} and \ref{U3generators} we list all matrices needed to generate the finite subgroups of $U(3)$ of order smaller than 512.

\begin{center}
\begin{longtable}{|ll|}
\firsthline
\multicolumn{2}{|c|}{Generators of determinant 1}\\
\hline
\endhead
\hline
\endfoot
\endlastfoot
\hline
\footnotesize
$\displaystyle{E = \left( \begin{array}{ccc}
	0 & 1 & 0 \\ 0 & 0 & 1 \\ 1 & 0 & 0
	\end{array} \right)}$ & \footnotesize $\displaystyle{F(n,a,b) = \left( \begin{array}{ccc}
	\eta^a & 0 & 0 \\ 0 & \eta^b & 0 \\ 0 & 0 & \eta^{-a-b}
	\end{array} \right)}$ \\ \footnotesize $\displaystyle{G(d,r,s) = \left( \begin{array}{ccc}
	\delta^r & 0 & 0 \\ 0 & 0 & \delta^s \\ 0 & -\delta^{-r-s} & 0
	\end{array} \right)}$ & \footnotesize
	$\displaystyle{H = \frac{1}{2} \left( \begin{array}{ccc}
	-1 & \mu_- & \mu_+ \\ \mu_- & \mu_+ & -1 \\ \mu_+ & -1 & \mu_-
	\end{array} \right)}$ \\ \footnotesize $\displaystyle{J = \left( \begin{array}{ccc}
	1 & 0 & 0 \\ 0 & \omega & 0 \\ 0 & 0 & \omega^2 
	\end{array} \right)}$ & \footnotesize $\displaystyle{K= 
	\frac{1}{\sqrt{3}\,i} \left( \begin{array}{ccc}
	1 & 1 & 1 \\ 
	1 & \omega & \omega^2 \\ 
	1 & \omega^2 & \omega
	\end{array} \right)}$\\ \footnotesize
	$\displaystyle{L = 
	\frac{1}{\sqrt{3}\,i} \left( \begin{array}{ccc}
	1 & 1 & \omega^2 \\ 
	1 & \omega & \omega \\ 
	\omega & 1 & \omega
	\end{array} \right)}$ & \footnotesize $\displaystyle{M= \left( \begin{array}{ccc}
	\beta & 0 & 0 \\ 0 & \beta^2 & 0 \\ 0 & 0 & \beta^4 
	\end{array} \right)}$ \\ \footnotesize $\displaystyle{N= 
	\frac{i}{\sqrt{7}} \left( \begin{array}{ccc}
	\beta^4 - \beta^3 & \beta^2 - \beta^5 & \beta   - \beta^6 \\ 
	\beta^2 - \beta^5 & \beta   - \beta^6 & \beta^4 - \beta^3 \\ 
	\beta   - \beta^6 & \beta^4 - \beta^3 & \beta^2 - \beta^5 
	\end{array} \right)}$ & \footnotesize
	$\displaystyle{P= 
	\left( \begin{array}{ccc}
	\epsilon & 0 & 0 \\ 
	0 & \epsilon & 0 \\ 
	0 & 0 & \epsilon\omega
	\end{array} \right)}$ \\
	\footnotesize $\displaystyle{Q=\left( \begin{array}{ccc}
	-1 & 0 & 0 \\ 
	0 & 0 & -\omega \\ 
	0 & -\omega^2 & 0
	\end{array} \right)}$ &
\\* 
\hline
\caption{Generators of finite subgroups of $SU(3)$.\\ $\eta:=e^{2\pi i/n},\quad \delta:=e^{2\pi i/d},\quad \mu_{\pm} = \frac{1}{2} \left( -1 \pm \sqrt{5} \right),\quad \omega=e^{2\pi i/3},\quad\beta=e^{2\pi i/7},\quad \epsilon=e^{4\pi i/9}$.}
\label{SU3generators}
\end{longtable}
\end{center}

\begin{center}
\begin{longtable}{|ll|}
\firsthline
\multicolumn{2}{|c|}{Generators for groups of determinant unequal 1}\\
\hline
\endhead
\hline
\endfoot
\endlastfoot
\hline
\footnotesize $\displaystyle{R(n,a,b,c) = \left( \begin{array}{ccc}
	0 & 0 & \eta^a \\ \eta^b & 0 & 0 \\ 0 & \eta^c & 0
	\end{array} \right)}$ & \footnotesize $\displaystyle{S(n,a,b,c) = \left( \begin{array}{ccc}
	\eta^a & 0 & 0 \\ 0 & 0 & \eta^b \\ 0 & \eta^c & 0
	\end{array} \right)}$\\ \footnotesize $\displaystyle{T(n,a,b,c) = \left( \begin{array}{ccc}
	0 & 0 & \eta^a \\ 0 & \eta^b & 0 \\ \eta^c & 0 & 0
	\end{array} \right)}$ &
	\footnotesize $\displaystyle{U(n,a,b,c) = \left( \begin{array}{ccc}
	0 & \eta^a & 0 \\ \eta^b & 0 & 0 \\ 0 & 0 & \eta^c
	\end{array} \right)}$ \\ \footnotesize $\displaystyle{V(n,a,b,c) = \left( \begin{array}{ccc}
	0 & \eta^a & 0 \\ 0 & 0 & \eta^b \\ \eta^c & 0 & 0 
	\end{array} \right)}$ & \footnotesize $\displaystyle{W(n,a,b,c) = \left( \begin{array}{ccc}
	\eta^a & 0 & 0 \\ 
	0 & \eta^b & 0 \\ 
	0 & 0 & \eta^c
	\end{array} \right)}$\\
	\footnotesize $\displaystyle{X_1=\left( \begin{array}{ccc}
	0 & \frac{1}{\sqrt{2}}\gamma^{11} & \frac{1}{\sqrt{2}}\gamma^{14} \\ 
	\frac{1}{\sqrt{2}}\gamma^5 & \frac{1}{2}\gamma^{20} & \frac{1}{2}\gamma^{11} \\ 
	\frac{1}{\sqrt{2}}\gamma^{14} & \frac{1}{2}\gamma^{17} & \frac{1}{2}\gamma^8
	\end{array} \right)}$ & \footnotesize $\displaystyle{X_2=\left( \begin{array}{ccc}
	\frac{1}{\sqrt{3}}\gamma^{21} & \frac{1}{\sqrt{6}}\gamma^{16} & \frac{1}{\sqrt{2}}\gamma^{13} \\ 
	\sqrt{\frac{2}{3}}\gamma^{14} & \frac{1}{2\sqrt{3}}\gamma^{21} & \frac{1}{2}\gamma^{18} \\ 
	0 & \frac{\sqrt{3}}{2}\gamma^{18} & \frac{1}{2}\gamma^3
	\end{array} \right)}$ \\ \footnotesize $\displaystyle{X_3=\left( \begin{array}{ccc}
	\frac{1}{\sqrt{3}}\vartheta^{31} & \frac{1}{\sqrt{6}}\vartheta^{14} & \frac{1}{\sqrt{2}}\vartheta^{4} \\ 
	\sqrt{\frac{2}{3}}\vartheta^{30} & \frac{1}{2\sqrt{3}}\vartheta^{31} & \frac{1}{2}\vartheta^{21} \\ 
	0 & \frac{\sqrt{3}}{2}\vartheta^{32} & \frac{1}{2}\vartheta^4
	\end{array} \right)}$ &
	\footnotesize $\displaystyle{X_4=\left( \begin{array}{ccc}
	0 & \frac{1}{\sqrt{2}}\vartheta^{13} & \frac{1}{\sqrt{2}}\vartheta^{12} \\ 
	\frac{1}{\sqrt{2}}\vartheta^{35} & \frac{1}{2}\vartheta^{24} & \frac{1}{2}\vartheta^{5} \\ 
	\frac{1}{\sqrt{2}}\vartheta^{18} & \frac{1}{2}\vartheta^{25} & \frac{1}{2}\vartheta^6
	\end{array} \right)}$ \\ \footnotesize $\displaystyle{X_5=\left( \begin{array}{ccc}
	\frac{1}{\sqrt{3}}\phi^9 & \sqrt{\frac{2}{3}} & 0 \\ 
	\sqrt{\frac{2}{3}}\phi^2 & \frac{1}{\sqrt{3}}\phi & 0 \\ 
	0 & 0 & \phi^5
	\end{array} \right)}$ & \footnotesize $\displaystyle{X_6=\left( \begin{array}{ccc}
	\gamma^{22} & 0 & 0 \\ 
	0 & \frac{1}{2}\gamma^{10} & \frac{\sqrt{3}}{2}\gamma^{11} \\ 
	0 & \frac{\sqrt{3}}{2}\gamma^{21} & \frac{1}{2}\gamma^{10}
	\end{array} \right)}$\\
	\footnotesize $\displaystyle{X_7=\left( \begin{array}{ccc}
	\frac{1}{\sqrt{3}}\psi^{9} & \frac{1}{\sqrt{6}}\psi^{2} & \frac{1}{\sqrt{2}}\psi^7 \\ 
	\frac{1}{\sqrt{6}}\psi^{4} & \frac{9+\sqrt{3}i}{12} & \frac{1}{2}\psi^{10} \\ 
	\frac{1}{\sqrt{2}}\psi^{11} & \frac{1}{2} & \frac{1}{2}\psi^2
	\end{array} \right)}$ & \footnotesize $\displaystyle{X_8=\left( \begin{array}{ccc}
	\frac{1}{\sqrt{3}}\psi^6 & \sqrt{\frac{2}{3}}\psi & 0 \\ 
	\sqrt{\frac{2}{3}}\psi^{11} & \frac{1}{\sqrt{3}} & 0 \\ 
	0 & 0 & \psi^3
	\end{array} \right)}$ \\ \footnotesize $\displaystyle{X_9=\left( \begin{array}{ccc}
	\frac{1}{\sqrt{3}}\gamma^{13} & \sqrt{\frac{2}{3}}\gamma^{14} & 0 \\ 
	\sqrt{\frac{2}{3}}\gamma^{12} & \frac{1}{\sqrt{3}}\gamma & 0 \\ 
	0 & 0 & \gamma^{19}
	\end{array} \right)}$ &
	\footnotesize $\displaystyle{X_{10}=\left( \begin{array}{ccc}
	0 & \frac{1}{\sqrt{2}}\gamma^3 & \frac{1}{\sqrt{2}}\gamma^{19} \\ 
	\frac{1}{\sqrt{2}}\gamma & \frac{1}{2}\gamma^2 & \frac{1}{2}\gamma^6 \\ 
	\frac{1}{\sqrt{2}}\gamma^{21} & \frac{1}{2}\gamma^{10} & \frac{1}{2}\gamma^{14}
	\end{array} \right)}$
\\* 
\hline
\caption{Generators of finite subgroups of $U(3)$.\\ $\gamma:=e^{2\pi i/24}$, $\vartheta=e^{2\pi i/36}$, $\phi=e^{2\pi i/16}$, $\psi=e^{2\pi i/12}$.}
\label{U3generators}
\end{longtable}
\end{center}
\hspace{0mm}
\\
Following \cite{miller,finitesubgroups-su3-fairbairn} all non-Abelian finite subgroups of $SU(3)$ which have a faithful three-dimensional irreducible representation can be cast into one of the types\footnote{Note that for some choices of $n,a,b,d,r,s$ the three-dimensional representations of $C(n,a,b)$, $D(n,a,b;d,r,s)$ given here could be reducible or lead to direct products with cyclic groups, so not all values of the parameters are allowed.\\
The allowed values for $n$ in $T_n$ are products of powers of primes of the form $3k+1$, $k\in\mathbbm{N}$. Please note furthermore that $T_n$ is in general not unique. The equation $(1+a+a^2)\hspace{1mm}\mathrm{mod}\hspace{1mm}n=0$ may have more than one solution, which can lead to non-isomorphic groups $T_n$ with the same $n$. There are for example two non-isomorphic groups $T_{91}$ in table \ref{SU3-subgroups}.} listed in table \ref{types-SU3-subgroups}.

\begin{table}
\begin{center}
\begin{tabular}{|lll|}
\hline
Group & Generators & References \\
\hline
$C(n,a,b)$ & $E,\enspace F(n,a,b)$ & \cite{miller,DTPOL} \\ 
$D(n,a,b;d,r,s)$ & $E,\enspace F(n,a,b),\enspace G(d,r,s)$ & \cite{miller,DTPOL,Zwicky}\\
$\Delta(3n^2)=C(n,0,1),\enspace n\ge 2$ & $E,\enspace F(n,0,1)$ & \cite{finitesubgroups-su3-fairbairn,BLW1,D3n^2,DTPOL}\\
$\Delta(6n^2)=D(n,0,1;2,1,1),\enspace n\ge 2$ & $E,\enspace F(n,0,1),\enspace G(2,1,1)$ & \cite{finitesubgroups-su3-fairbairn,BLW1,D6n^2,DTPOL}\\
$T_n=C(n,1,a),\enspace (1+a+a^2)\hspace{1mm}\mathrm{mod}\hspace{1mm}n=0$ & $E,\enspace F(n,1,a)$ & \cite{BLW1,BLW2,Fairbairn2,DTPOL}\\
$A_5=\Sigma(60)$ & $E,\enspace F(2,0,1),\enspace H$ & \cite{miller,finitesubgroups-su3-fairbairn,SimpleFinite,DTPOL,A5}\\
$PSL(2,7)=\Sigma(168)$ & $E,\enspace M,\enspace N$ & \cite{miller,finitesubgroups-su3-fairbairn,SimpleFinite,DTPOL}\\
$\Sigma(36\phi)$ & $E,\enspace J, \enspace K$ & \cite{miller,finitesubgroups-su3-fairbairn,DTPOL,Principal_series}\\
$\Sigma(72\phi)$ & $E,\enspace J, \enspace K,\enspace L$ & \cite{miller,finitesubgroups-su3-fairbairn,DTPOL,Principal_series}\\
$\Sigma(216\phi)$ & $E,\enspace J, \enspace K,\enspace P$ & \cite{miller,finitesubgroups-su3-fairbairn,DTPOL,Principal_series}\\
$\Sigma(360\phi)$ & $E,\enspace F(2,0,1),\enspace H,\enspace Q$ & \cite{miller,finitesubgroups-su3-fairbairn,DTPOL}
\\
\hline
\end{tabular}
\caption{Types of finite subgroups of $SU(3)$ \cite{miller,finitesubgroups-su3-fairbairn}.}
\label{types-SU3-subgroups}
\end{center}
\end{table}
\hspace{0mm}\\
To our knowledge only one series of finite subgroups of $U(3)$ (determinant $\neq$ 1) is known by now, namely $\Sigma(3N^3),\enspace (N\in\{3k\vert k\in\mathbbm{N}\backslash\{0,1\}\})$, which has been published recently by Ishimori et al. in \cite{Ishimori:2010au}. A well known member of $\Sigma(3N^3)$ is $\Sigma(81)$ \cite{T7andSigma81,Sigma81}.

\subsubsection{The finite subgroups of $SU(3)$ of order smaller than 512}\label{SU3-results}

In table \ref{SU3-subgroups} we list the finite groups of order smaller than 512 that can not be written as direct products with cyclic groups and that have a faithful three-dimensional irreducible representation of determinant 1.

\begin{center}
\begin{longtable}{|lllcl|}
\firsthline
$\mbox{\textlbrackdbl}g,j\mbox{\textrbrackdbl}$ & Classification & Other names & $\mathrm{ord}(C)$ & References\\
\hline
\endhead
\hline
\endfoot
\endlastfoot
\hline
$\mbox{\textlbrackdbl} 12, 3 \mbox{\textrbrackdbl}$ & $\Delta(12)=\Delta(3\times 2^2)$ & $A_4,\enspace T$ & 1 & \cite{miller,DTPOL,Hamermesh,A4}\\
$\mbox{\textlbrackdbl} 21, 1 \mbox{\textrbrackdbl}$ & $C(7,1,2)$ & $T_7$ & 1 & \cite{BLW1,BLW2,Fairbairn2,DTPOL,T7,T7andSigma81}\\
$\mbox{\textlbrackdbl} 24, 12 \mbox{\textrbrackdbl}$ & $\Delta(24)=\Delta(6\times 2^2)$ & $S_4,\enspace O$ & 1 & \cite{miller,DTPOL,Hamermesh,S4} \\
$\mbox{\textlbrackdbl} 27, 3 \mbox{\textrbrackdbl}$ & $\Delta(27)=\Delta(3\times 3^2)$ & & 3 & \cite{Delta(27)a,Delta(27)b}\\
$\mbox{\textlbrackdbl} 39, 1 \mbox{\textrbrackdbl}$ & $C(13,1,3)$ & $T_{13}$ & 1 & \\
$\mbox{\textlbrackdbl} 48, 3 \mbox{\textrbrackdbl}$ & $\Delta(48)=\Delta(3\times 4^2)$ & & 1 & \\
$\mbox{\textlbrackdbl} 54, 8 \mbox{\textrbrackdbl}$ & $\Delta(54)=\Delta(6\times 3^2)$ & & 3 & \cite{Delta(54)}\\
$\mbox{\textlbrackdbl} 57, 1 \mbox{\textrbrackdbl}$ & $C(19,1,7)$ & $T_{19}$ & 1 & \\
$\mbox{\textlbrackdbl} 60, 5 \mbox{\textrbrackdbl}$ & $A_5$ & $\Sigma(60),\enspace I$ & 1 & \cite{miller,finitesubgroups-su3-fairbairn,SimpleFinite,DTPOL,A5,Hamermesh} \\
$\mbox{\textlbrackdbl} 75, 2 \mbox{\textrbrackdbl}$ & $\Delta(75)=\Delta(3\times 5^2)$ & & 1 & \\
$\mbox{\textlbrackdbl} 81, 9 \mbox{\textrbrackdbl}$ & $C(9,1,1)$ & & 3 & \\
$\mbox{\textlbrackdbl} 84, 11 \mbox{\textrbrackdbl}$ & $C(14,1,2)$ & & 1 & \\
$\mbox{\textlbrackdbl} 93, 1 \mbox{\textrbrackdbl}$ & $C(31,1,5)$ & $T_{31}$ & 1 & \\
$\mbox{\textlbrackdbl} 96, 64 \mbox{\textrbrackdbl}$ & $\Delta(96)=\Delta(6\times 4^2)$ & & 1 & \\
$\mbox{\textlbrackdbl} 108, 15 \mbox{\textrbrackdbl}$ & $\Sigma(36\phi)$ & & 3 & \cite{miller,finitesubgroups-su3-fairbairn,DTPOL,Principal_series}\\
$\mbox{\textlbrackdbl} 108, 22 \mbox{\textrbrackdbl}$ & $\Delta(108)=\Delta(3\times 6^2)$ & & 3 & \\
$\mbox{\textlbrackdbl} 111, 1 \mbox{\textrbrackdbl}$ & $C(37,1,10)$ & $T_{37}$ & 1 & \\
$\mbox{\textlbrackdbl} 129, 1 \mbox{\textrbrackdbl}$ & $C(43,1,6)$ & $T_{43}$ & 1 & \\
$\mbox{\textlbrackdbl} 147, 1 \mbox{\textrbrackdbl}$ & $C(49,10,6)$ & $T_{49}$ & 1 & \\
$\mbox{\textlbrackdbl} 147, 5 \mbox{\textrbrackdbl}$ & $\Delta(147)=\Delta(3\times 7^2)$ & & 1 & \\
$\mbox{\textlbrackdbl} 150, 5 \mbox{\textrbrackdbl}$ & $\Delta(150)=\Delta(6\times 5^2)$ & & 1 & \\
$\mbox{\textlbrackdbl} 156, 14 \mbox{\textrbrackdbl}$ & $C(26,1,3)$ & & 1 & \\
$\mbox{\textlbrackdbl} 162, 14 \mbox{\textrbrackdbl}$ & $D(9,1,1;2,1,1)$ & & 3 &\\
$\mbox{\textlbrackdbl} 168, 42 \mbox{\textrbrackdbl}$ & $PSL(2,7)$ & $\Sigma(168)$ & 1 & \cite{miller,SimpleFinite,DTPOL,PSL27} \\
$\mbox{\textlbrackdbl} 183, 1 \mbox{\textrbrackdbl}$ & $C(61,1,13)$ & $T_{61}$ & 1 & \\
$\mbox{\textlbrackdbl} 189, 8 \mbox{\textrbrackdbl}$ & $C(21,1,2)$ & & 3 & \\
$\mbox{\textlbrackdbl} 192, 3 \mbox{\textrbrackdbl}$ & $\Delta(192)=\Delta(3\times 8^2)$ & & 1 & \\
$\mbox{\textlbrackdbl} 201, 1 \mbox{\textrbrackdbl}$ & $C(67,1,29)$ & $T_{67}$ & 1 & \\
$\mbox{\textlbrackdbl} 216, 88 \mbox{\textrbrackdbl}$ & $\Sigma(72\phi)$ & & 3 & \cite{miller,finitesubgroups-su3-fairbairn,DTPOL,Principal_series}\\
$\mbox{\textlbrackdbl} 216, 95 \mbox{\textrbrackdbl}$ & $\Delta(216)=\Delta(6\times 6^2)$ & & 3 & \\
$\mbox{\textlbrackdbl} 219, 1 \mbox{\textrbrackdbl}$ & $C(73,1,8)$ & $T_{73}$ & 1 & \\
$\mbox{\textlbrackdbl} 228, 11 \mbox{\textrbrackdbl}$ & $C(38,1,7)$ & & 1 & \\
$\mbox{\textlbrackdbl} 237, 1 \mbox{\textrbrackdbl}$ & $C(79,1,23)$ & $T_{79}$ & 1 & \\
$\mbox{\textlbrackdbl} 243, 26 \mbox{\textrbrackdbl}$ & $\Delta(243)=\Delta(3\times 9^2)$ & & 3 & \\
$\mbox{\textlbrackdbl} 273, 3 \mbox{\textrbrackdbl}$ & $C(91,1,16)$ & $T_{91}$ & 1 & \\
$\mbox{\textlbrackdbl} 273, 4 \mbox{\textrbrackdbl}$ & $C(91,1,9)$ & $T_{91}$ & 1 & \\
$\mbox{\textlbrackdbl} 291, 1 \mbox{\textrbrackdbl}$ & $C(97,1,35)$ & $T_{97}$ & 1 & \\
$\mbox{\textlbrackdbl} 294, 7 \mbox{\textrbrackdbl}$ & $\Delta(294)=\Delta(6\times 7^2)$ & & 1 & \\
$\mbox{\textlbrackdbl} 300, 43 \mbox{\textrbrackdbl}$ & $\Delta(300)=\Delta(3\times 10^2)$ & & 1 & \\
$\mbox{\textlbrackdbl} 309, 1 \mbox{\textrbrackdbl}$ & $C(103,1,46)$ & $T_{103}$ & 1 & \\
$\mbox{\textlbrackdbl} 324, 50 \mbox{\textrbrackdbl}$ & $C(18,1,1)$ & & 3 & \\
$\mbox{\textlbrackdbl} 327, 1 \mbox{\textrbrackdbl}$ & $C(109,1,45)$ & $T_{109}$ & 1 & \\
$\mbox{\textlbrackdbl} 336, 57 \mbox{\textrbrackdbl}$ & $C(28,1,2)$ & & 1 & \\
$\mbox{\textlbrackdbl} 351, 8 \mbox{\textrbrackdbl}$ & $C(39,1,3)$ & & 3 & \\
$\mbox{\textlbrackdbl} 363, 2 \mbox{\textrbrackdbl}$ & $\Delta(363)=\Delta(3\times 11^2)$ & & 1 & \\
$\mbox{\textlbrackdbl} 372, 11 \mbox{\textrbrackdbl}$ & $C(62,1,5)$ & & 1 & \\
$\mbox{\textlbrackdbl} 381, 1 \mbox{\textrbrackdbl}$ & $C(127,1,19)$ & $T_{127}$ & 1 & \\
$\mbox{\textlbrackdbl} 384, 568 \mbox{\textrbrackdbl}$ & $\Delta(384)=\Delta(6\times 8^2)$ & & 1 & \\
$\mbox{\textlbrackdbl} 399, 3 \mbox{\textrbrackdbl}$ & $C(133,1,11)$ & $T_{133}$ & 1 & \\
$\mbox{\textlbrackdbl} 399, 4 \mbox{\textrbrackdbl}$ & $C(133,1,30)$ & $T_{133}$ & 1 & \\
$\mbox{\textlbrackdbl} 417, 1 \mbox{\textrbrackdbl}$ & $C(139,1,42)$ & $T_{139}$ & 1 & \\
$\mbox{\textlbrackdbl} 432, 103 \mbox{\textrbrackdbl}$ & $\Delta(432)=\Delta(3\times 12^2)$ & & 3 &\\
$\mbox{\textlbrackdbl} 444, 14 \mbox{\textrbrackdbl}$ & $C(74,1,10)$ & & 1 & \\
$\mbox{\textlbrackdbl} 453, 1 \mbox{\textrbrackdbl}$ & $C(151,1,32)$ & $T_{151}$ & 1 & \\
$\mbox{\textlbrackdbl} 471, 1 \mbox{\textrbrackdbl}$ & $C(157,1,12)$ & & 1 & \\
$\mbox{\textlbrackdbl} 486, 61 \mbox{\textrbrackdbl}$ & $\Delta(486)=\Delta(6\times 9^2)$ & & 3 & \\
$\mbox{\textlbrackdbl} 489, 1 \mbox{\textrbrackdbl}$ & $C(163,1,58)$ & $T_{163}$ & 1 & \\
$\mbox{\textlbrackdbl} 507, 1 \mbox{\textrbrackdbl}$ & $C(169,1,22)$ & $T_{169}$ & 1 & \\
$\mbox{\textlbrackdbl} 507, 5 \mbox{\textrbrackdbl}$ & $\Delta(507)=\Delta(3\times 13^2)$ & & 1 &
\\* 
\hline
\caption{The finite subgroups of $SU(3)$ of order smaller than 512. $\mbox{\textlbrackdbl}g,j\mbox{\textrbrackdbl}$ denotes the SmallGroups number and $\mathrm{ord}(C)$ denotes the order of the center of the group.}
\label{SU3-subgroups}
\end{longtable}
\end{center}

\subsubsection{The finite subgroups of $U(3)$ of order smaller than 512}\label{U3-results}

In table \ref{U3-subgroups} we list the finite groups of order smaller than 512 that can not be written as direct products with cyclic groups and that have a faithful three-dimensional irreducible representation of determinant unequal 1.
\medskip
\\
The generators of these groups were taken from the list of faithful three-dimensional irreducible representations constructed with GAP (see footnote 6 on page \pageref{footnote6}). For most groups GAP constructed unitary representations. For the groups where GAP did not give a unitary representation we constructed a unitary representation in the following way: Let $D$ be an $n$-dimensional representation of a group $G$ and let $\{v_j|j=1,...,n\}$ be an orthonormal basis of $\mathbbm{C}^n$ with respect to the scalar product
	\begin{equation}
	\langle x,y\rangle:=\frac{1}{\mathrm{ord}(G)}\sum_{a\in G}(D(a)x,D(a)y),
	\end{equation}
where $(x,y):=x^{\dagger} y$ is the standard scalar product on $\mathbbm{C}^n$. Then if we define $T:=(v_1,...,v_n)$, the representation $T^{-1}DT$ is unitary with respect to $(.\hspace{0.5mm},.)$ . Usually this construction is used to prove that every representation of a finite group is equivalent to a unitary representation. The power of modern computer algebra systems allows us to explicitly calculate the scalar product $\langle .,.\rangle$ and to construct $T$ by Gram-Schmidt orthogonalisation. In this way the unitary generators $X_1,...,X_{10}$ were obtained.

\begin{center}
\begin{longtable}{|llll|}
\firsthline
$\mbox{\textlbrackdbl}g,j\mbox{\textrbrackdbl}$ & Classification & Generators & $\mathrm{ord}(C)$\\
\hline
\endhead
\hline
\endfoot
\endlastfoot
\hline
$\mbox{\textlbrackdbl} 27, 4 \mbox{\textrbrackdbl}$ & & $R(3,1,1,2),\enspace R(3,1,2,1)$ & 3\\
$\mbox{\textlbrackdbl} 36, 3 \mbox{\textrbrackdbl}$ & $\Delta(3\times 2^2,2)$ & $R(6,2,1,1),\enspace R(3,0,1,1)$ & 3\\
$\mbox{\textlbrackdbl} 48, 30 \mbox{\textrbrackdbl}$ & $S_4(2)$ & $S(4,1,3,1),\enspace T(4,3,3,1)$ & 2\\
$\mbox{\textlbrackdbl} 63, 1 \mbox{\textrbrackdbl}$ & $T_7(2)$ & $R(21,5,10,13),\enspace R(21,3,20,5)$ & 3\\
$\mbox{\textlbrackdbl} 81, 6 \mbox{\textrbrackdbl}$ & & $R(9,2,4,7),\enspace R(9,3,8,5)$ & 9\\
$\mbox{\textlbrackdbl} 81, 7 \mbox{\textrbrackdbl}$ & $\Sigma(3\times 3^3)$ & $R(3,2,2,0),\enspace R(3,1,1,0)$ & 3\\
$\mbox{\textlbrackdbl} 81, 8 \mbox{\textrbrackdbl}$ & & $R(9,2,2,8),\enspace R(9,4,4,7)$ & 3\\
$\mbox{\textlbrackdbl} 81, 10 \mbox{\textrbrackdbl}$ & & $R(9,4,7,4),\enspace R(9,2,5,8)$ & 3\\
$\mbox{\textlbrackdbl} 81, 14 \mbox{\textrbrackdbl}$ & & $R(9,4,7,1),\enspace R(9,8,5,2),\enspace R(9,6,3,0)$ & 9\\
$\mbox{\textlbrackdbl} 96, 65 \mbox{\textrbrackdbl}$ & $S_4(3)$ & $S(8,1,5,1),\enspace T(8,3,3,7)$ & 4\\
$\mbox{\textlbrackdbl} 108, 3 \mbox{\textrbrackdbl}$ & $\Delta(3\times 2^2,3)$ & $R(18,4,5,5),\enspace R(9,0,5,5)$ & 9\\
$\mbox{\textlbrackdbl} 108, 11 \mbox{\textrbrackdbl}$ & $\Delta(6\times 3^2,2)$ & $S(12,5,9,1),\enspace T(12,3,7,11),\enspace U(12,1,9,5)$ & 6\\
$\mbox{\textlbrackdbl} 108, 19 \mbox{\textrbrackdbl}$ & & $R(18,4,7,1),\enspace R(9,4,7,1)$ & 3\\
$\mbox{\textlbrackdbl} 108, 21 \mbox{\textrbrackdbl}$ & & $R(6,0,3,5),\enspace R(3,2,0,2)$ & 3\\
$\mbox{\textlbrackdbl} 117, 1 \mbox{\textrbrackdbl}$ & $T_{13}(2)$ & $R(39,1,29,35),\enspace R(39,15,19,31)$ & 3\\
$\mbox{\textlbrackdbl} 144, 3 \mbox{\textrbrackdbl}$ & $\Delta(3\times 4^2,2)$ & $R(12,7,8,5),\enspace R(12,6,4,10)$ & 3\\
$\mbox{\textlbrackdbl} 162, 10 \mbox{\textrbrackdbl}$ & & $S(3,0,1,0),\enspace T(3,1,1,0)$ & 3\\
$\mbox{\textlbrackdbl} 162, 12 \mbox{\textrbrackdbl}$ & & $S(9,4,7,4),\enspace T(9,2,2,8)$ & 3\\
$\mbox{\textlbrackdbl} 162, 44 \mbox{\textrbrackdbl}$ & $\Delta'(6\times 3^2,2,1)$ & $S(9,2,8,5),\enspace T(9,4,1,7),\enspace U(9,5,8,2)$ & 9\\
$\mbox{\textlbrackdbl} 171, 1 \mbox{\textrbrackdbl}$ & $T_{19}(2)$ & $R(57,7,11,20),\enspace R(57,33,22,40)$ & 3\\
$\mbox{\textlbrackdbl} 189, 1 \mbox{\textrbrackdbl}$ & $T_7(3)$ & $R(63,22,23,32),\enspace R(63,9,46,1)$ & 9\\
$\mbox{\textlbrackdbl} 189, 4 \mbox{\textrbrackdbl}$ & & $R(63,1,58,25),\enspace R(63,2,53,50)$ & 3\\
$\mbox{\textlbrackdbl} 189, 5 \mbox{\textrbrackdbl}$ & & $R(63,1,16,4),\enspace R(63,2,32,8)$ & 3\\
$\mbox{\textlbrackdbl} 189, 7 \mbox{\textrbrackdbl}$ & & $R(21,5,17,6),\enspace R(21,3,13,12)$ & 3\\
$\mbox{\textlbrackdbl} 192, 182 \mbox{\textrbrackdbl}$ & $\Delta(6\times 4^2,2)$ & $T(4,0,2,1),\enspace U(4,3,0,2)$ & 2\\
$\mbox{\textlbrackdbl} 192, 186 \mbox{\textrbrackdbl}$ & $S_4(4)$ & $S(16,1,9,1),\enspace T(16,3,3,11)$ & 8\\
$\mbox{\textlbrackdbl} 216, 17 \mbox{\textrbrackdbl}$ & $\Delta(6\times 3^2,3)$ & $T(24,3,11,19),\enspace T(24,21,13,5),\enspace U(24,17,9,1)$ & 12\\
$\mbox{\textlbrackdbl} 216, 25 \mbox{\textrbrackdbl}$ & & $X_1,\enspace X_2$ & 6\\
$\mbox{\textlbrackdbl} 225, 3 \mbox{\textrbrackdbl}$ & $\Delta(3\times 5^2,2)$ & $R(15,1,11,8),\enspace R(15,12,7,1)$ & 3\\
$\mbox{\textlbrackdbl} 243, 16 \mbox{\textrbrackdbl}$ & & $R(9,2,4,7),\enspace R(9,6,8,5)$ & 9\\
$\mbox{\textlbrackdbl} 243, 19 \mbox{\textrbrackdbl}$ & & $V(27,5,14,5),\enspace W(27,2,2,11)$ & 9\\
$\mbox{\textlbrackdbl} 243, 20 \mbox{\textrbrackdbl}$ & & $V(27,5,23,5),\enspace W(27,2,2,20)$ & 9\\
$\mbox{\textlbrackdbl} 243, 24 \mbox{\textrbrackdbl}$ & & $R(27,5,13,22),\enspace R(27,9,26,17)$ & 27\\
$\mbox{\textlbrackdbl} 243, 25 \mbox{\textrbrackdbl}$ & & $R(9,0,2,4),\enspace R(9,0,4,8)$ & 3\\
$\mbox{\textlbrackdbl} 243, 27 \mbox{\textrbrackdbl}$ & & $R(9,2,0,4),\enspace R(9,7,0,8)$ & 3\\
$\mbox{\textlbrackdbl} 243, 50 \mbox{\textrbrackdbl}$ & & $R(27,5,23,14),\enspace V(27,11,20,2),\enspace V(27,17,17,17)$ & 27\\
$\mbox{\textlbrackdbl} 243, 55 \mbox{\textrbrackdbl}$ & & $R(9,2,2,8),\enspace R(9,4,4,7),\enspace V(3,1,0,0)$ & 9\\
$\mbox{\textlbrackdbl} 252, 11 \mbox{\textrbrackdbl}$ & & $R(42,23,25,22),\enspace R(21,9,4,1)$ & 3\\
$\mbox{\textlbrackdbl} 279, 1 \mbox{\textrbrackdbl}$ & $T_{31}(2)$ & $R(93,7,20,35),\enspace R(93,45,40,70)$ & 3\\
$\mbox{\textlbrackdbl} 300, 13 \mbox{\textrbrackdbl}$ & $\Delta(6\times 5^2,2)$ & $S(20,1,9,5),\enspace T(20,15,19,11)$ & 2\\
$\mbox{\textlbrackdbl} 324, 3 \mbox{\textrbrackdbl}$ & $\Delta(3\times 2^2,4)$ & $R(54,10,17,17),\enspace R(27,0,17,17)$ & 27\\
$\mbox{\textlbrackdbl} 324, 13 \mbox{\textrbrackdbl}$ & & $S(12,3,7,3),\enspace T(12,1,1,9)$ & 6\\
$\mbox{\textlbrackdbl} 324, 15 \mbox{\textrbrackdbl}$ & & $S(36,1,13,1),\enspace T(36,23,23,11)$ & 6\\
$\mbox{\textlbrackdbl} 324, 17 \mbox{\textrbrackdbl}$ & & $S(36,1,25,1),\enspace T(36,35,35,11)$ & 6\\
$\mbox{\textlbrackdbl} 324, 43 \mbox{\textrbrackdbl}$ & & $R(54,10,19,1),\enspace R(54,20,38,2)$ & 9\\
$\mbox{\textlbrackdbl} 324, 45 \mbox{\textrbrackdbl}$ & & $R(18,4,17,5),\enspace R(9,3,8,5)$ & 9\\
$\mbox{\textlbrackdbl} 324, 49 \mbox{\textrbrackdbl}$ & & $R(18,4,13,7),\enspace R(9,4,4,7)$ & 3\\
$\mbox{\textlbrackdbl} 324, 51 \mbox{\textrbrackdbl}$ & & $R(18,4,13,13),\enspace R(9,7,4,4)$ & 3\\
$\mbox{\textlbrackdbl} 324, 60 \mbox{\textrbrackdbl}$ & & $R(6,0,3,5),\enspace R(3,0,0,2)$ & 3\\
$\mbox{\textlbrackdbl} 324, 102 \mbox{\textrbrackdbl}$ & $\Delta'(6\times 3^2,2,2)$ & $T(36,1,25,13),\enspace T(36,29,5,17),\enspace U(36,35,11,23)$ & 18\\
$\mbox{\textlbrackdbl} 324, 111 \mbox{\textrbrackdbl}$ & & $X_3,\enspace X_4$ & 9\\
$\mbox{\textlbrackdbl} 324, 128 \mbox{\textrbrackdbl}$ & & $R(18,4,7,1),\enspace R(9,4,7,1),\enspace R(18,8,17,17)$ & 9\\
$\mbox{\textlbrackdbl} 333, 1 \mbox{\textrbrackdbl}$ & $T_{37}(2)$ & $R(111,1,26,47),\enspace R(111,39,52,94)$ & 3\\
$\mbox{\textlbrackdbl} 351, 1 \mbox{\textrbrackdbl}$ & $T_{13}(3)$ & $R(117,8,37,46),\enspace R(117,81,74,92)$ & 9\\
$\mbox{\textlbrackdbl} 351, 4 \mbox{\textrbrackdbl}$ & & $R(117,16,100,40),\enspace R(117,32,83,80)$ & 3\\
$\mbox{\textlbrackdbl} 351, 5 \mbox{\textrbrackdbl}$ & & $R(117,16,22,1),\enspace R(117,32,44,2)$ & 3\\
$\mbox{\textlbrackdbl} 351, 7 \mbox{\textrbrackdbl}$ & & $R(39,1,16,9),\enspace R(39,15,32,18)$ & 3\\
$\mbox{\textlbrackdbl} 384, 571 \mbox{\textrbrackdbl}$ & $\Delta(6\times 4^2,3)$ & $T(8,1,5,3),\enspace U(8,1,3,7)$ & 4\\
$\mbox{\textlbrackdbl} 384, 581 \mbox{\textrbrackdbl}$ & $S_4(5)$ & $S(32,1,17,1),\enspace T(32,3,3,19)$ & 16\\
$\mbox{\textlbrackdbl} 387, 1 \mbox{\textrbrackdbl}$ & $T_{43}(2)$ & $R(129,11,52,109),\enspace R(129,108,104,89)$ & 3\\
$\mbox{\textlbrackdbl} 432, 3 \mbox{\textrbrackdbl}$ & $\Delta(3\times 4^2,3)$ & $R(36,1,8,35),\enspace R(36,18,16,34)$ & 9\\
$\mbox{\textlbrackdbl} 432, 33 \mbox{\textrbrackdbl}$ & $\Delta(6\times 3^2,4)$ & $T(48,3,19,35),\enspace U(48,25,9,41),\enspace U(48,29,29,29)$ & 24\\
$\mbox{\textlbrackdbl} 432, 57 \mbox{\textrbrackdbl}$ & & $X_5,\enspace X_6$ & 12\\
$\mbox{\textlbrackdbl} 432, 100 \mbox{\textrbrackdbl}$ & & $R(36,1,22,25),\enspace R(18,1,4,7)$ & 3\\
$\mbox{\textlbrackdbl} 432, 102 \mbox{\textrbrackdbl}$ & & $R(12,3,0,1),\enspace R(6,1,0,1)$ & 3\\
$\mbox{\textlbrackdbl} 432, 239 \mbox{\textrbrackdbl}$ & & $X_7,\enspace X_8$ & 6\\
$\mbox{\textlbrackdbl} 432, 260 \mbox{\textrbrackdbl}$ & $\Delta(6\times 4^2,2)$ & $S(12,7,5,3),\enspace T(12,9,5,7),\enspace U(12,9,1,11)$ & 6\\
$\mbox{\textlbrackdbl} 432, 273 \mbox{\textrbrackdbl}$ & & $X_9,\enspace X_{10}$ & 12\\
$\mbox{\textlbrackdbl} 441, 1 \mbox{\textrbrackdbl}$ & & $R(147,94,125,26),\enspace R(147,90,103,52)$ & 3\\
$\mbox{\textlbrackdbl} 441, 7 \mbox{\textrbrackdbl}$ & $\Delta(3\times 7^2,2)$ & $R(21,1,8,5),\enspace R(21,9,16,10)$ & 3\\
$\mbox{\textlbrackdbl} 468, 14 \mbox{\textrbrackdbl}$ & & $R(78,2,19,31),\enspace R(39,15,19,31)$ & 3\\
$\mbox{\textlbrackdbl} 486, 26 \mbox{\textrbrackdbl}$ & & $S(27,5,14,5),\enspace T(27,19,19,10)$ & 9\\
$\mbox{\textlbrackdbl} 486, 28 \mbox{\textrbrackdbl}$ & & $S(27,5,23,5),\enspace T(27,1,1,10)$ & 9\\
$\mbox{\textlbrackdbl} 486, 125 \mbox{\textrbrackdbl}$ & & $S(9,2,5,2),\enspace T(9,7,7,4),\enspace U(3,0,1,1)$ & 9\\
$\mbox{\textlbrackdbl} 486, 164 \mbox{\textrbrackdbl}$ & $\Delta'(6\times 3^2,3,1)$ & $T(27,5,14,23),\enspace T(27,7,25,16),\enspace U(27,19,10,1)$ & 27\\* 
\hline
\caption{The finite subgroups of $U(3)$ (which are not finite subgroups of $SU(3)$) of order smaller than 512.}
\label{U3-subgroups}
\end{longtable}
\end{center}

\subsubsection{Numerical consistency check of the obtained results.}

The results listed in sections \ref{SU3-results} and \ref{U3-results} are based on the computer algebra system GAP \cite{GAP} and the SmallGroups Library \cite{SmallGroups,SmallGroupsLibrary}. As already mentioned, our results are in perfect agreement with theorem \ref{theorem1}, which is the reason we did not list groups that can be written as direct products with cyclic groups.
\\
Furthermore all finite subgroups of $SU(3)$ listed in table \ref{SU3-subgroups} could be cast into one of the types listed in \cite{miller,finitesubgroups-su3-fairbairn} (see table \ref{types-SU3-subgroups}).
\\
In order not to rely on GAP and the SmallGroups Library only we developed a program (in the programming language C) which performs the following tasks:
	\begin{enumerate}
	 \item Given the generators (as $3\times 3$-matrices) of a finite group $G$ it numerically\footnote{The reason why we decided to perform a numerical analysis was of course calculation time. For some of the larger groups more than 500000 matrix multiplications were needed to obtain all group elements.} constructs all group elements in the defining representation $D$. An example for an algorithm for this purpose can be found in \cite{DTPOL}. The program uses the data type ``double'' for the real and imaginary parts of the matrix elements, respectively. An important subroutine of the program is to determine whether two matrices are equal. We decided to use the following criterion: Two matrices $A$ and $B$ are to be regarded as equal by the program if
	 \begin{equation}
	 |\mathrm{Re}(A_{ij}-B_{ij})|<10^{-7}\mbox{ and }|\mathrm{Im}(A_{ij}-B_{ij})|<10^{-7}\quad\forall i,j\in\{1,2,3\}.
	 \end{equation}
	 Using the program the orders of all groups listed in tables \ref{SU3-subgroups} and \ref{U3-subgroups} were verified (more precise: not falsified) numerically. In addition the orders of these groups were checked analytically using GAP.
	\item After the explicit construction of the defining representation $D$ of the group its character $\chi_D$ can be calculated numerically. A scalar product of the characters $\chi_D$ and $\chi_{D'}$ of two representations $D,D'$ of $G$ can be defined as
	\begin{equation}\label{SP-characters}
	(\chi_D,\chi_{D'})_G=\frac{1}{\mathrm{ord}(G)}\sum_{b\in G}\chi_{D}(b)^{\ast}\chi_{D'}(b)
	\end{equation}
	$D$ is irreducible if and only if $(\chi_D,\chi_D)_G=1$ \cite{Hamermesh}, which can easily be tested numerically. Again we regard the representation $D$ as irreducible if
		\begin{equation}
		|\mathrm{Re}(\chi_D,\chi_D)-1|<10^{-7}\mbox{ and }|\mathrm{Im}(\chi_D,\chi_D)|<10^{-7}.
		\end{equation}
	In this sense the irreducibility of all defining representations of the groups listed in tables \ref{SU3-subgroups} and \ref{U3-subgroups} was verified (more precise: not falsified) numerically. 
	\end{enumerate}
Please note that the numerical analysis described above can of course not prove the correctness of the results listed in tables \ref{SU3-subgroups} and \ref{U3-subgroups}. Note furthermore that we do not, in any sense, claim that the lists \ref{SU3-subgroups} and \ref{U3-subgroups} are complete.

\section{Construction of some series of finite subgroups of $U(3)$.}

The following theorem will allow us to construct some new infinite series of finite subgroups of $U(3)$ that have a faithful 3-dimensional irreducible representation and can not be written as a direct product with a cyclic group.

\begin{theorem}\label{GroupSeriesTheorem}
Let $G=H\rtimes\zed_n$ be a finite group with the following properties\footnote{We use the following notation for the semidirect product of two groups $A$ and $B$: $G=A\rtimes B\Rightarrow$ $A$ is a normal subgroup of $G$ and there exists a homomorphism $\phi:B\rightarrow\mathrm{Aut}(A)$. The product is defined by $(a,b)(a',b')=(a\phi(b)a',bb')$.}
	\begin{enumerate}
	 \item $G$ has a faithful $m$-dimensional irreducible representation $D$.
	 \item $n$ is prime.
	 \item The center of $G$ is of order $c\neq n$ with $c$ prime or $c=1$.
	 \item $G$ can not be written as a direct product with a cyclic group.
	\end{enumerate}
Let furthermore $A_1,...,A_a$ be generators of $D(H)$ and let $B$ be a generator of $D(\zed_n)$. Then the group\footnote{The symbol $\lang...\rang$ means ``generated by''.}
	\begin{equation}
	G_b:=\lang A_1,...,A_a,\beta B\rang,\quad\beta=e^{2\pi i/b},\quad b\in\mathbbm{N}\backslash\{0\}
	\end{equation}
(which by construction has a faithful $m$-dimensional irreducible representation) can not be written as a direct product with a cyclic group if and only if
	\begin{equation}
	b=c^j n^k,\quad j,k\in \mathbbm{N}.
	\end{equation}
\end{theorem}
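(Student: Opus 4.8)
Here is the plan of proof I would follow.

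\medskip
\noindent\textbf{Reduction to centre and abelianisation.}
The plan is to reduce the whole statement to data attached to $Z(G_b)$ and the abelianisation $G_b^{\mathrm{ab}}$. Writing $\mathcal{D}:=D(H)=\langle A_1,\dots,A_a\rangle$ and $B=D(z)$ for a generator $z$ of $\mathcal{Z}_n$, the subgroup $\mathcal{D}$ is normal in $G_b$, we have $G_b=\mathcal{D}\,\langle\beta B\rangle$, and conjugation by $\beta B$ acts on $\mathcal{D}$ exactly as $z$ acts on $H$. Since $G_b$ carries a faithful irreducible representation, Schur's lemma identifies $Z(G_b)$ with the (cyclic) group of scalar matrices in $G_b$; a short computation (the element $M(\beta B)^k$ is scalar iff the corresponding $hz^{k}$ lies in $Z(G)$) gives $Z(G_b)=\langle\zeta_c,\beta^{n}\rangle$, where $\zeta_c:=D(g_0)$ for a generator $g_0$ of $C$ and $\beta^{n}=(\beta B)^{n}$, so $|Z(G_b)|=\operatorname{lcm}\!\big(c,\,b/\gcd(n,b)\big)$. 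Two facts are used throughout: because $n$ is prime and $c\neq n$ with $c$ prime or $1$, the projection $G\to\mathcal{Z}_n$ must kill $C$, whence $C\subseteq H$ and $\zeta_c\in\mathcal{D}$; and because $\beta$ is a central scalar, $G_b'=G'\subseteq H$ is independent of $b$.

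\medskip
\noindent\textbf{The splitting criterion.}
The key tool is the elementary criterion that a finite group $K$ with a faithful irreducible representation satisfies $K\cong F\times\mathcal{Z}_\mu$ with $\mu>1$ if and only if some central element has image in $K^{\mathrm{ab}}$ generating a cyclic direct summand of $K^{\mathrm{ab}}$. Reducing $\mu$ to a prime power and working prime by prime, $K$ is a direct product with a nontrivial cyclic group iff some prime $p$ is \emph{splittable}, i.e. some central element of $p$-power order maps to a direct summand of $(K^{\mathrm{ab}})_p$. I will use two standard facts about a finite abelian $p$-group $A$: if $y\in pA$ then $\langle y\rangle$ is never a direct summand; and summand membership of a cyclic subgroup can be detected through a quotient onto a cyclic $p$-group. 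Only primes dividing $|Z(G_b)|$ can be splittable, so by the formula above the candidates are the primes dividing $c$ and those dividing $b/\gcd(n,b)$; in particular, when $b=c^{j}n^{k}$ only $c$ and $n$ survive.

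\medskip
\noindent\textbf{The direction $b\neq c^{j}n^{k}\Rightarrow$ direct product.}
Here I would factor $b=e\,w$ with $e=c^{j}n^{k}$ the $\{c,n\}$-part and $w>1$ coprime to $cn$. Since $\gcd(w,\operatorname{lcm}(n,e))=1$, the cyclic group $\langle\beta B\rangle$ splits as $\mathcal{Z}_w\times\mathcal{Z}_{\operatorname{lcm}(n,e)}$; the first factor is the central scalar $\xi:=(\beta B)^{\operatorname{lcm}(n,e)}$ of order $w$, and I would check that $\langle\mathcal{D},(\beta B)^{w}\rangle$ is a complement isomorphic to $G_e$ (relabelling $z\mapsto z^{w}$, which is again a generator of $\mathcal{Z}_n$). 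This yields $G_b\cong G_e\times\mathcal{Z}_w$, a direct product with a nontrivial cyclic group. The orders and centres match, consistently with Theorem~\ref{theorem1} since $\gcd(w,|Z(G_e)|)=1$.

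\medskip
\noindent\textbf{The direction $b=c^{j}n^{k}\Rightarrow$ not a direct product, and the main obstacle.}
For the converse I must show that neither $n$ nor $c$ is splittable. For $p=n$ (and $k\ge 2$; otherwise the central $n$-part is trivial) the argument is clean: the central $n$-part is generated by $\big((\beta B)^{c^{j}}\big)^{n}$, an $n$-th power, so the whole central $n$-part lies in $n\,G_b^{\mathrm{ab}}$ and contains no direct summand. The prime $c$ is the crux, because the generator of the central $c$-part equals $(\beta B)^{n^{\max(k,1)}}$, which, as $\gcd(c,n)=1$, is \emph{not} a $c$-th power, so the easy obstruction fails. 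Here I would exploit that $c$-splittability is governed by $b$-independent data: projecting $G_b^{\mathrm{ab}}$ onto the cyclic group $G_b/\mathcal{D}$, the image of this generator generates the entire $c$-part $\mathcal{Z}_{c^{\,j-1}}$ of $G_b/\mathcal{D}$, and a diagram chase reduces the existence of a $c$-summand to whether $\langle\bar\zeta_c\rangle$ is a direct summand of the fixed group $(H/G')_c$. Since for $b=1$ one has $(G^{\mathrm{ab}})_c=(H/G')_c$, this is exactly the condition that $c$ be splittable for $G=G_1$; but hypothesis~(4) says $G$ is not a direct product with a cyclic group, so $\langle\bar\zeta_c\rangle$ is not such a summand, and $c$ is not splittable for any $G_{c^{j}n^{k}}$. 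Combining the two primes finishes the proof; the decisive step is precisely this last reduction, which converts the $b$-dependent summand question for $c$ into the single hypothesis~(4) about $G$ itself.
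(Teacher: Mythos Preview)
Your approach is correct and genuinely different from the paper's. Both proofs handle the ``only if'' direction (extracting a cyclic factor $\mathcal{Z}_w$ when $b$ has a prime divisor $\ne c,n$) in essentially the same way, by splitting the cyclic group $\langle\beta B\rangle$ into its $\{c,n\}$-part and its coprime part. The difference is entirely in the ``if'' direction.

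The paper argues by contradiction at the level of matrix elements: assuming $G_b=X\times Y$ with $Y$ cyclic and nontrivial, it picks an $\alpha\one_m\in Y$ with $\alpha\beta B\in X$, computes $(\alpha\beta B)^{c^{j}n^{k-1}}=e^{2\pi i/n}\one_m\in X$, and from $Y\cap\langle e^{2\pi i/n}\one_m\rangle=\{\one_m\}$ forces $Y\subset\langle e^{2\pi i/c^{j}}\one_m\rangle$ and then $Y=\langle e^{2\pi i/c^{j}}\one_m\rangle$. Intersecting with $\mathcal D$ yields a splitting $\mathcal D=S\times\langle e^{2\pi i/c}\one_m\rangle$, hence $G\cong(S\rtimes\zed_n)\times\zed_c$, contradicting hypothesis~(4). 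This is concrete and uses nothing beyond the centre computation of Theorem~\ref{GroupCenterTheorem}.

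You instead pass to the abelianisation and use the criterion that a central subgroup $\langle y\rangle$ is a direct factor iff there exists a homomorphism $G_b\to\zed_{|y|}$ sending $y$ to a generator, which you then test prime by prime. For $p=n$ the obstruction is immediate because the central $n$-generator is visibly an $n$-th power. For $p=c$ you reduce, via the short exact sequence $0\to(H/G')_c\to(G_b^{\mathrm{ab}})_c\to\zed_{c^{\,\max(j-1,0)}}\to 0$ and the relation $c^{\,j-1}\bar\zeta=\bar\zeta_c$, to the $b$-independent question of whether $\langle\bar\zeta_c\rangle$ is a summand of $(H/G')_c=(G^{\mathrm{ab}})_c$; since this is exactly what hypothesis~(4) forbids, you are done. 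This diagram chase is the real content of your proof, and it does go through: if $E'\subset(H/G')_c$ complements $\langle\bar\zeta_c\rangle$, then $\langle\bar\zeta\rangle\oplus E'=(G_b^{\mathrm{ab}})_c$, and conversely any complement of $\langle\bar\zeta\rangle$ in $(G_b^{\mathrm{ab}})_c$ lies in $(H/G')_c$ and misses $\bar\zeta_c$.

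What each buys: the paper's argument is self-contained and needs no abstract machinery, but the step pinning down $Y$ and then transporting the splitting back to $G$ involves several ad hoc case distinctions. Your route separates the two primes cleanly and makes transparent \emph{why} hypothesis~(4) is exactly what is needed; it also generalises more readily (nothing about $m$-dimensional matrices is used beyond Schur's lemma). One point worth making explicit in a full write-up is that your splitting criterion requires $|y|=|\bar y|$ in $K^{\mathrm{ab}}$, which you use implicitly when concluding that only the maximal central $p$-subgroup can be a direct factor.
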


\begin{theorem}\label{GroupCenterTheorem}
Let $G=H\rtimes \zed_n$ be a finite group fulfilling the properties 1.-4. of theorem \ref{GroupSeriesTheorem}. Then the center of
	\begin{equation}
	\lang A_1,...,A_a,\beta B\rang,\quad \beta=e^{2\pi i/(c^jn^k)},\quad j,k\in\mathbbm{N}
	\end{equation}
is given by
	\begin{itemize}
	 \item $\lang e^{2\pi i/c}\one_m\rang\cong\zed_c$ for $j=0,\enspace k=0$,
	 \item $\lang e^{2\pi i/c}\one_m,e^{2\pi i/n^{k-1}}\one_m\rang\cong\zed_{cn^{k-1}}$ for $j=0,\enspace k>0$,
	 \item $\lang e^{2\pi i/c^j}\one_m\rang\cong\zed_{c^j}$ for $j>0,\enspace k=0$ and
	 \item $\lang e^{2\pi i/c^j}\one_m, e^{2\pi i/n^{k-1}}\one_m\rang\cong\zed_{c^j n^{k-1}}$ \quad for $j>0,\enspace k>0$.
	\end{itemize}
If $c=1,\enspace k>0$: $G_{n^k}$ is isomorphic to $H\rtimes\zed_{n^k}$ and the center of $G_{n^k}$ is isomorphic to $\zed_{n^{k-1}}$.
\end{theorem}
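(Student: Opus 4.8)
The plan is to reduce everything to finding the scalar matrices contained in $G_b$. Write $\tau:=\beta B$. By Theorem \ref{GroupSeriesTheorem} the defining representation of $G_b$ on $\mathbb{C}^m$ is irreducible, so Schur's lemma tells us that a matrix commutes with every element of $G_b$ if and only if it is a multiple of $\one_m$. Hence $Z(G_b)$ equals precisely the group $S$ of scalar matrices lying in $G_b$, and the theorem reduces to identifying $S$.

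First I would show that the centre $C$ of $G$ lies inside $H$. Let $q\colon G\to\zed_n$ be the canonical projection with kernel $H$; its restriction to $C$ has image a subgroup of $\zed_n$ whose order divides both $c=\mathrm{ord}(C)$ and the prime $n$. Since $c$ is either $1$ or a prime different from $n$, we have $\gcd(c,n)=1$, so $q(C)=\{0\}$ and $C\subseteq H$. Faithfulness of $D$ together with Schur's lemma then identifies $D(C)$ with the scalar group $\lang e^{2\pi i/c}\one_m\rang$. Next I would bring $G_b$ into normal form: since $\tau D(h)\tau^{-1}=D(g_0)\,D(h)\,D(g_0)^{-1}\in D(H)$ (the scalar $\beta$ cancels, $g_0$ being a generator of the $\zed_n$ factor and $B=D(g_0)$), the subgroup $D(H)$ is normal in $G_b$ and every element is of the form $D(h)\tau^{s}=\beta^{s}D\big((h,\,s\bmod n)\big)$ with $h\in H$. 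Such an element is scalar iff $D((h,s\bmod n))$ is, i.e.\ iff $(h,s\bmod n)\in C$; by $C\subseteq H$ this forces $n\mid s$ and $h\in C$. Letting $h$ range over $C$ and $s=ns'$, the scalar values run through $\lang e^{2\pi i/c}\one_m\rang\cdot\lang\tau^{n}\rang$, so
\begin{equation}
S=\lang e^{2\pi i/c}\one_m,\ \beta^{n}\one_m\rang .
\end{equation}

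It then remains to evaluate $\beta^{n}$ for $b=c^{j}n^{k}$ and to read off the cyclic group generated. For $k>0$ one has $\beta^{n}=e^{2\pi i/(c^{j}n^{k-1})}$, whereas for $k=0$ the coprimality $\gcd(n,c^{j})=1$ gives $\lang\beta^{n}\one_m\rang=\lang e^{2\pi i/c^{j}}\one_m\rang$. Combining with $\lang e^{2\pi i/c}\one_m\rang$ and using $\gcd(c,n)=1$ produces exactly the four stated groups $\zed_c$, $\zed_{cn^{k-1}}$, $\zed_{c^{j}}$ and $\zed_{c^{j}n^{k-1}}$, with the generators as listed (the apparent discrepancy in generating sets being harmless, since e.g.\ $\lang e^{2\pi i/c^{j}}\one_m,e^{2\pi i/n^{k-1}}\one_m\rang$ and $\lang e^{2\pi i/c}\one_m,e^{2\pi i/(c^{j}n^{k-1})}\one_m\rang$ are the same cyclic group of order $c^{j}n^{k-1}$).

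For the special case $c=1$ I would additionally observe that $\tau$ has order $\mathrm{lcm}(n,n^{k})=n^{k}$ and that $D(H)\cap\lang\tau\rang=\{\one_m\}$ (a power $\tau^{s}$ is scalar only when $n\mid s$, and then lies in $D(H)$ only if $\beta^{s}=1$, i.e.\ $n^{k}\mid s$), so that $G_{n^{k}}=D(H)\rtimes\lang\tau\rang\cong H\rtimes\zed_{n^{k}}$, the action being inherited through $\zed_{n^{k}}\twoheadrightarrow\zed_n$, with centre $\lang\beta^{n}\one_m\rang\cong\zed_{n^{k-1}}$. The genuine obstacle throughout is the ``no extra scalars'' direction: one must rule out that some word of non-trivial $B$-degree is accidentally scalar. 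This is precisely what $C\subseteq H$ prevents, so deriving that inclusion from hypothesis 3 (that $c\neq n$ with $c$ prime or $c=1$) is the decisive step; everything after it is bookkeeping with least common multiples.
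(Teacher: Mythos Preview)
Your proof is correct and in fact cleaner than the paper's own argument. Both proofs start from Schur's lemma (the center of $G_b$ is its subgroup of scalar matrices), but they diverge on the key step of ruling out ``accidental'' scalars of $B$-degree not divisible by $n$.

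The paper introduces the minimal positive $B$-degree $s$ occurring among all words in $A_1,\ldots,A_a,B$ that evaluate to $\one_m$, supposes $s<n$, and then reaches a contradiction by an order-counting argument (Lemma~\ref{lemma1}): if $s<n$ then the scalar $\beta^{s}\one_m$ would already lie in $G_b$, forcing $\mathrm{ord}(G_b)$ to be larger than a bound obtained from the normal form $\langle A_1,\ldots,A_a,e^{2\pi i/c^j}\one_m\rangle\cdot\langle XB\rangle$. Only after establishing $s=n$ does the paper read off the center.

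You sidestep this entirely by observing, via the projection $G\to\zed_n$ and the hypothesis $\gcd(c,n)=1$, that $C\subseteq H$. This one structural fact does the same job as the paper's contradiction: combined with faithfulness of $D$ and Schur, it shows that $D(h)\tau^{s}$ is scalar only when $n\mid s$ and $h\in C$, so the scalar subgroup is exactly $D(C)\cdot\langle\tau^{n}\rangle=\langle e^{2\pi i/c}\one_m,\beta^{n}\one_m\rangle$. The case analysis for $\beta^{n}$ then matches the paper's four cases (your generator $e^{2\pi i/(c^{j}n^{k-1})}\one_m$ generates the same cyclic group as the paper's pair $e^{2\pi i/c^{j}}\one_m,\ e^{2\pi i/n^{k-1}}\one_m$). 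Your route is shorter and more conceptual; the paper's approach has the incidental benefit that its order computation (equation~(\ref{equ1})) is reused in the proof of Theorem~\ref{GroupSeriesTheorem}, but that order is equally recoverable from your normal form.
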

\hspace{0mm}
\\
The proofs of these theorems can be found in appendix \ref{appendixC}. Let us now use theorem \ref{GroupSeriesTheorem} to construct some infinite series of finite subgroups of $U(3)$.

\subsection{The group series $T_n(m)$}

The groups $T_n$ \cite{BLW1,BLW2,Fairbairn2,DTPOL} have the structure
	\begin{equation}
	T_n=\zed_n\rtimes \zed_3,
	\end{equation}
where $n$ is a prime of the form $3k+1$. Since the center of $T_n$ is trivial we can apply theorem \ref{GroupSeriesTheorem} to find (for every $T_n$) an infinite series of finite subgroups of $U(3)$
	\begin{equation}
	T_{n}(m):=\lang e^{2\pi i/3^m}E,\enspace F(n,1,a)\rang\cong \zed_{n}\rtimes\zed_{3^m}, 
	\end{equation}
where $(1+a+a^2)\hspace{1mm}\mathrm{mod}\hspace{1mm}n=0,\enspace m\in\mathbbm{N}\backslash\{0\}.$ Theorem \ref{GroupCenterTheorem} tells us that the center of $T_n(m)$ is isomorphic to $\zed_{3^{m-1}}$.

\subsection{The group series $\Delta(3n^2,m)$}

The group $\Delta(3n^2)$ has the structure \cite{BLW1,D3n^2}
	\begin{equation}
	\Delta(3n^2)\cong (\zed_n\times \zed_n)\rtimes \zed_3, \quad n\in\mathbbm{N}\backslash\{0,1\}
	\end{equation}
and it has trivial center if $\mathrm{gcd}(n,3)=1$ \enspace\cite{D3n^2}. In the other cases one finds $c=3$, and theorem \ref{GroupSeriesTheorem} can not be applied.
\medskip
\\
Thus we find the following series of finite subgroups of $U(3)$:
	\begin{equation}
	\Delta(3n^2,m):=\lang e^{2\pi i/3^m}E,\enspace F(n,0,1)\rang\cong(\zed_n\times \zed_n)\rtimes\zed_{3^m},
	\end{equation}
where $n\in\{k\in\mathbbm{N}\vert \enspace\mathrm{gcd}(3,k)=1,k>1\},\enspace m\in\mathbbm{N}\backslash\{0\}$.

\subsection{The group series $S_{4}(m)$}\label{groupseriesS4m}
The group $S_4$ is a semidirect product of $A_4$ (which is generated by the even permutations $(14)(23)$ and $(123)$) and $\zed_2$ (generated by the odd permutation $(23)$). Since $S_4$ possesses a faithful three-dimensional irreducible representation and its center is trivial theorem \ref{GroupSeriesTheorem} leads to the following series of finite subgroups of $U(3)$:
	\begin{displaymath}
	A_4\rtimes\zed_{2^m},\quad m\in \mathbbm{N}\backslash\{0\}.
	\end{displaymath}
A faithful three-dimensional irreducible representation of $S_4$ can be obtained by reduction of the four-dimensional representation
	\begin{displaymath}
	D(\sigma)(x_1,...,x_4)^T=(x_{\sigma(1)},...,x_{\sigma(4)})^T\quad\quad \sigma\in S_4,
	\end{displaymath}
which leads to
	\begin{displaymath}
	\begin{split}
	& (14)(23)\mapsto \mbox{\begin{small}$\left(
	\begin{matrix}
	   1 & 0 & 0 \\
	   0 & -1 & 0 \\
	   0 & 0 & -1
	\end{matrix}
	\right)$\end{small}}=:A,\quad
	(123)\mapsto \mbox{\begin{small}$\left(
	\begin{matrix}
	   0 & 0 & -1 \\
	   -1 & 0 & 0 \\
	   0 & 1 & 0
	\end{matrix}
	\right)$\end{small}}=:B,\\
	& (23)\mapsto \mbox{\begin{small}$\left(
	\begin{matrix}
	   1 & 0 & 0 \\
	   0 & 0 & 1 \\
	   0 & 1 & 0
	\end{matrix}
	\right)$\end{small}}=:C.
	\end{split}
	\end{displaymath}
Thus we have found the following series of finite subgroups of $U(3)$:
	\begin{equation}
	S_{4}(m):=\lang A,B,e^{2\pi i/2^m}C\rang\cong A_4\rtimes\zed_{2^m},\quad m\in \mathbbm{N}\backslash\{0\}. 
	\end{equation}

\subsection{The group series $\Delta(6n^2,m)$ and $\Delta'(6n^2,j,k)$}

The group $\Delta(6n^2)$ has the structure \cite{BLW1,D6n^2}
	\begin{equation}
	(\zed_n\times\zed_n)\rtimes S_3
	\end{equation}
and its presentation is given by \cite{D6n^2}
	\begin{displaymath}
	\begin{split}
	& a^3=b^2=(ab)^2=\one\quad\mbox{ (presentation of $S_3$),}\\
	& c^n=d^n=\one,\quad cd=dc\quad\mbox{ (presentation of $\zed_n\times\zed_n$),}\\
	& aca^{-1}=c^{-1}d^{-1},\enspace bcb^{-1}=d^{-1},\enspace ada^{-1}=c,\enspace bdb^{-1}=c^{-1}\quad\mbox{ (semidirect product)}.
	\end{split}
	\end{displaymath}
This presentation can easily be rearranged to a presentation of
	\begin{equation}
	((\zed_n\times\zed_n)\rtimes\zed_3)\rtimes\zed_2\cong\Delta(3n^2)\rtimes\zed_2
	\end{equation}
in the following way:
	\begin{displaymath}
	\begin{split}
	& c^n=d^n=\one,\enspace cd=dc\quad\mbox{ (presentation of $\zed_n\times\zed_n$)},\\
	& a^3=\one\quad\mbox{ (presentation of $\zed_3$)},\\
	& aca^{-1}=c^{-1}d^{-1},\enspace ada^{-1}=c\quad\mbox{ (semidirect product with $\zed_3$)},\\
	& b^2=\one\quad\mbox{ (presentation of $\zed_2$)},\\
	& (ab)^2=\one \Rightarrow abab=\one\Rightarrow bab=bab^{-1}=a^{-1},\\
	& bcb^{-1}=d^{-1},\enspace bdb^{-1}=c^{-1}\quad\mbox{ (semidirect product with $\zed_2$)}.
	\end{split}
	\end{displaymath}
The center of $\Delta(6n^2)\cong\Delta(3n^2)\rtimes\zed_2$ is given by the center of $\Delta(3n^2)$, which can be of order $1$ or $3$. Thus we can apply theorem \ref{GroupSeriesTheorem} to construct new series of finite subgroups of $U(3)$.
\medskip
\\
A faithful three-dimensional irreducible representation of $\Delta(6n^2)$ is given by \cite{D6n^2}
	\begin{displaymath}
	a\mapsto \mbox{\begin{small}$\left(
	\begin{matrix}
	   0 & 1 & 0 \\
	   0 & 0 & 1 \\
	   1 & 0 & 0
	\end{matrix}
	\right)$\end{small}}=:A,\enspace
	b\mapsto \mbox{\begin{small}$\left(
	\begin{matrix}
	   0 & 0 & -1 \\
	   0 & -1 & 0 \\
	   -1 & 0 & 0
	\end{matrix}
	\right)$\end{small}}=:B,\enspace
	d\mapsto \mbox{\begin{small}$\left(
	\begin{matrix}
	   1 & 0 & 0 \\
	   0 & \eta & 0 \\
	   0 & 0 & \eta^{\ast}
	\end{matrix}
	\right)$\end{small}}=:C,
	\end{displaymath}
where $\eta=e^{2\pi i/n}$ and $n\in\mathbbm{N}\backslash\{0,1\}$.
\medskip
\\
There are two possibilities:
	\begin{itemize}
	 \item $\mathrm{gcd}(3,n)=1\Rightarrow$ The center of $\Delta(6n^2)$ is trivial, which leads to the group series
		\begin{equation}
		\Delta(6n^2,m):=\lang A, e^{2\pi i/2^m}B, C\rang\cong\Delta(3n^2)\rtimes\zed_{2^m},
		\end{equation}
		$n\in\{k\in\mathbbm{N}\vert \enspace\mathrm{gcd}(3,k)=1,k>1\},\enspace m\in\mathbbm{N}\backslash\{0\}$. This series contains $S_4(m)=\Delta(6\times 2^2,m)$ as a subseries.
	 \item $\mathrm{gcd}(3,n)=3\Rightarrow$ The center of $\Delta(6n^2)$ is of order 3, which leads to the group series
		\begin{equation}
		\Delta'(6n^2,j,k):=\lang A, e^{2\pi i/(3^j 2^k)}B, C\rang,
		\end{equation}
		$n\in\{k\in\mathbbm{N}\vert \enspace\mathrm{gcd}(3,k)=3\},\enspace j,k\in\mathbbm{N}\backslash\{0\}.$
	\end{itemize}

\section{Analysis of two interesting finite subgroups of $U(3)$}

\subsection{The group $\mbox{\textlbrackdbl}27,4\mbox{\textrbrackdbl}$}

The group $\mbox{\textlbrackdbl}27,4\mbox{\textrbrackdbl}$ is the smallest group listed in table \ref{U3-subgroups}. It is generated by
	\begin{displaymath}
	A:=R(3,1,1,2)=\mbox{\begin{small}$\left(
	\begin{matrix}
	   0 & 0 & \omega \\
	   \omega & 0 & 0 \\
	   0 & \omega^2 & 0
	\end{matrix}
	\right)$\end{small}},\quad B:=R(3,1,2,1)=\mbox{\begin{small}$\left(
	\begin{matrix}
	   0 & 0 & \omega \\
	   \omega^2 & 0 & 0 \\
	   0 & \omega & 0
	\end{matrix}
	\right)$\end{small}}.
	\end{displaymath}
A much simpler set of generators is given by
	\begin{equation}\label{274generators}
	R:=BA=\mbox{\begin{small}$\left(
	\begin{matrix}
	   0 & 1 & 0 \\
	   0 & 0 & 1 \\
	   \omega^2 & 0 & 0
	\end{matrix}
	\right)$\end{small}},\quad S:=AB^{-1}=\mbox{\begin{small}$\left(
	\begin{matrix}
	   1 & 0 & 0 \\
	   0 & \omega^2 & 0 \\
	   0 & 0 & \omega
	\end{matrix}
	\right)$\end{small}},
	\end{equation}
which is a generating set, because $(SR)^5=A$ and $R(SR)^{-5}=B$. From the generators $R$ and $S$ given in equation (\ref{274generators}) we find that
	\begin{displaymath}
	\lang R\rang\cong\zed_9,\quad \lang S\rang\cong\zed_3,\quad \lang R\rang\cap\lang S\rang=\{\one_3\},\quad S^{-1}RS=R^4,
	\end{displaymath}
thus
	\begin{equation}
	\mbox{\textlbrackdbl}27,4\mbox{\textrbrackdbl}:=\lang R,S\rang\cong \zed_9\rtimes\zed_3.
	\end{equation}
Due to the semidirect product structure (especially using the fact that every element of $\mbox{\textlbrackdbl}27,4\mbox{\textrbrackdbl}$ can be written in the form $\omega^x R^y S^z$) the derivation of the conjugacy classes is straight forward. One finds eleven conjugacy classes
	\begin{equation}
	\begin{array}{l}
	C_1 =\{\one_3\},\\
	C_2 =\{\omega \one_3\},\\
	C_3 =\{\omega^2  \one_3\},\\
	C_4 =\{R,\omega R,\omega^2 R\},\\
	C_5 =\{R^2,\omega R^2,\omega^2 R^2\},\\
	C_6 =\{S,\omega S,\omega^2 S\},\\
	C_7 =\{S^2,\omega S^2,\omega^2 S^2\},\\
	C_8 =\{RS,\omega RS,\omega^2 RS\},\\
	C_9 =\{RS^2,\omega RS^2,\omega^2 RS^2 \},\\
	C_{10} =\{R^2 S,\omega R^2 S,\omega^2 R^2 S\},\\
	C_{11} =\{R^2 S^2,\omega R^2 S^2,\omega^2 R^2 S^2\}.
	\end{array}
	\end{equation}
The nontrivial normal subgroups are found to be
	\begin{equation}
	\begin{array}{l}
	C_1\cup C_2\cup C_3=\lang \omega\one_3 \rang\cong\zed_3,\\
	C_1\cup C_2\cup C_3\cup C_4\cup C_5=\lang R\rang\cong\zed_9,\\
	C_1\cup C_2\cup C_3\cup C_6\cup C_7=\lang \omega\one_m,S\rang\cong\zed_3\times\zed_3\\
	C_1\cup C_2\cup C_3\cup C_8\cup C_{11}=\lang RS,R^2 S^2\rang\cong\zed_9,\\
	C_1\cup C_2\cup C_3\cup C_9\cup C_{10}=\lang RS^2,R^2 S\rang\cong\zed_9.
	\end{array}
	\end{equation}
Since there are eleven conjugacy classes there must be eleven inequivalent irreducible representations. These are the nine one-dimensional irreducible representations of the factor group
	\begin{displaymath}
	\lang R,S\rang/\lang \omega \one_3\rang\cong \zed_3\times\zed_3,
	\end{displaymath}
the defining representation and its complex conjugate:
	\begin{subequations}
	\begin{eqnarray}
	&& \textbf{\underline{1}}_{(i,j)}:\quad R\mapsto \omega^i,\enspace S\mapsto \omega^j,\quad (i,j=0,1,2),\\
	&& \textbf{\underline{3}}:\quad R\mapsto R,\enspace S\mapsto S,\\
	&& \textbf{\underline{3}}^{\ast}:\quad R\mapsto R^{\ast},\enspace S\mapsto S^{\ast}.
	\end{eqnarray}
	\end{subequations}
The character table of $\mbox{\textlbrackdbl}27,4\mbox{\textrbrackdbl}$ is shown in table \ref{charactertable274}.

\begin{table}
\begin{center}
\begin{tabular}{||c||ccccccccccc||}
\hline \hline
$\mbox{\textlbrackdbl}27,4\mbox{\textrbrackdbl}$ & 
$C_1$ & $C_2$ & $C_3$ & $C_4$ & $C_5$ & $C_6$ & $C_7$ & $C_8$ &
$C_9$ & $C_{10}$ & $C_{11}$\\
(\# $C_k$) & (1) & (1) & (1) & (3) & (3) & (3) & (3) & (3) &
(3) & (3) & (3) \\
$\mathrm{ord}(C_k)$ & 1 & 3 & 3 & 9 & 9 & 3 & 3 & 9 & 9 & 9 & 9\\
\hline \hline
$\mathbf{1}_{(0,0)}$ &
$1$ & $1$ & $1$ &
$1$ & $1$ & $1$ &
$1$ & $1$ & $1$ &
$1$ & $1$\\
$\mathbf{1}_{(0,1)}$ &
$1$ & $1$ & $1$ &
$1$ & $1$ & $\omega$ &
$\omega^2$ & $\omega$ & $\omega^2$ &
$\omega$ & $\omega^2$\\
$\mathbf{1}_{(0,2)}$ &
$1$ & $1$ & $1$ &
$1$ & $1$ & $\omega^2$ &
$\omega$ & $\omega^2$ & $\omega$ &
$\omega^2$ & $\omega$\\
$\mathbf{1}_{(1,0)}$ &
$1$ & $1$ & $1$ &
$\omega$ & $\omega^2$ & $1$ &
$1$ & $\omega$ & $\omega$ &
$\omega^2$ & $\omega^2$\\
$\mathbf{1}_{(1,1)}$ &
$1$ & $1$ & $1$ &
$\omega$ & $\omega^2$ & $\omega$ &
$\omega^2$ & $\omega^2$ & $1$ &
$1$ & $\omega$\\
$\mathbf{1}_{(1,2)}$ &
$1$ & $1$ & $1$ &
$\omega$ & $\omega^2$ & $\omega^2$ &
$\omega$ & $1$ & $\omega^2$ &
$\omega$ & $1$\\
$\mathbf{1}_{(2,0)}$ &
$1$ & $1$ & $1$ &
$\omega^2$ & $\omega$ & $1$ &
$1$ & $\omega^2$ & $\omega^2$ &
$\omega$ & $\omega$\\
$\mathbf{1}_{(2,1)}$ &
$1$ & $1$ & $1$ &
$\omega^2$ & $\omega$ & $\omega$ &
$\omega^2$ & $1$ & $\omega$ &
$\omega^2$ & $\omega^2$\\
$\mathbf{1}_{(2,2)}$ &
$1$ & $1$ & $1$ &
$\omega^2$ & $\omega$ & $\omega^2$ &
$\omega$ & $\omega$ & $1$ &
$1$ & $1$\\
\hline
$\textbf{\underline{3}}$ & $3$ & $3\omega$ & $3\omega^2$ & 0 & 0 & 0 & 0 & 0 & 0 & 0 & 0\\
$\textbf{\underline{3}}^{\ast}$ & $3$ & $3\omega^2$ & $3\omega$ & 0 & 0 & 0 & 0 & 0 & 0 & 0 & 0\\
\hline \hline
\end{tabular}
\caption{Character table of $\mbox{\textlbrackdbl}27,4\mbox{\textrbrackdbl}$. 
The number of elements in each class is given by the numbers in
parentheses in the second line of the table.
\label{charactertable274}}
\end{center}
\end{table}
\hspace{0mm}\\
The tensor products are given by
	\begin{subequations}
	\begin{eqnarray}
	&& \textbf{\underline{1}}_{(i,j)}\otimes\textbf{\underline{1}}_{(k,l)}=\textbf{\underline{1}}_{((i+k)\hspace{0.5mm}\mathrm{mod}3,\hspace{0.5mm}(j+l)\hspace{0.5mm}\mathrm{mod}3)},\quad (i,j,k,l=0,1,2)\\
	&& \textbf{\underline{3}}\otimes\textbf{\underline{3}}=\textbf{\underline{3}}^{\ast}\oplus\textbf{\underline{3}}^{\ast}\oplus\textbf{\underline{3}}^{\ast},\\
	&& \textbf{\underline{3}}^{\ast}\otimes\textbf{\underline{3}}^{\ast}=\textbf{\underline{3}}\oplus\textbf{\underline{3}}\oplus\textbf{\underline{3}},\\
	&& \textbf{\underline{3}}\otimes\textbf{\underline{3}}^{\ast}=\bigoplus_{i,j=0}^2 \textbf{\underline{1}}_{(i,j)}.
	\end{eqnarray}
	\end{subequations}
The corresponding invariant subspaces are given by
	\begin{subequations}
	\begin{eqnarray}
	&& V_{\textbf{\underline{3}}\otimes\textbf{\underline{3}}\rightarrow \textbf{\underline{3}}^{\ast}}=\mathrm{Span}\left(e_1\otimes e_1,\enspace e_2\otimes e_2,\enspace e_3\otimes e_3\right),\\
	&& V_{\textbf{\underline{3}}\otimes\textbf{\underline{3}}\rightarrow \textbf{\underline{3}}^{\ast}}'=\mathrm{Span}\left(e_2\otimes e_3,\enspace \omega e_3\otimes e_1,\enspace \omega^2 e_1\otimes e_2\right),\\
	&& V_{\textbf{\underline{3}}\otimes\textbf{\underline{3}}\rightarrow \textbf{\underline{3}}^{\ast}}''=\mathrm{Span}\left(e_3\otimes e_2,\enspace \omega e_1\otimes e_3,\enspace \omega^2 e_2\otimes e_1\right),\\
	&& V_{\textbf{\underline{3}}\otimes \textbf{\underline{3}}^{\ast}\rightarrow \textbf{\underline{1}}_{(0,0)}}=\mathrm{Span}\left(e_1\otimes e_1+e_2\otimes e_2+e_3\otimes e_3\right),\\
	&& V_{\textbf{\underline{3}}\otimes \textbf{\underline{3}}^{\ast}\rightarrow \textbf{\underline{1}}_{(0,1)}}=\mathrm{Span}\left(e_1\otimes e_2+e_2\otimes e_3+\omega^2 e_3\otimes e_1\right),\\
	&& V_{\textbf{\underline{3}}\otimes \textbf{\underline{3}}^{\ast}\rightarrow \textbf{\underline{1}}_{(0,2)}}=\mathrm{Span}\left(e_1\otimes e_3+\omega^2 e_2\otimes e_1+\omega^2 e_3\otimes e_2\right),\\
	&& V_{\textbf{\underline{3}}\otimes \textbf{\underline{3}}^{\ast}\rightarrow \textbf{\underline{1}}_{(1,0)}}=\mathrm{Span}\left(e_1\otimes e_1+\omega e_2\otimes e_2+\omega^2 e_3\otimes e_3\right),\\
	&& V_{\textbf{\underline{3}}\otimes \textbf{\underline{3}}^{\ast}\rightarrow \textbf{\underline{1}}_{(1,1)}}=\mathrm{Span}\left(e_1\otimes e_2+\omega e_2\otimes e_3+\omega e_3\otimes e_1\right),\\
	&& V_{\textbf{\underline{3}}\otimes \textbf{\underline{3}}^{\ast}\rightarrow \textbf{\underline{1}}_{(1,2)}}=\mathrm{Span}\left(e_1\otimes e_3+e_2\otimes e_1+\omega e_3\otimes e_2\right),\\
	&& V_{\textbf{\underline{3}}\otimes \textbf{\underline{3}}^{\ast}\rightarrow \textbf{\underline{1}}_{(2,0)}}=\mathrm{Span}\left(e_1\otimes e_1+\omega^2 e_2\otimes e_2+\omega e_3\otimes e_3\right),\\
	&& V_{\textbf{\underline{3}}\otimes \textbf{\underline{3}}^{\ast}\rightarrow \textbf{\underline{1}}_{(2,1)}}=\mathrm{Span}\left(e_1\otimes e_2+\omega^2 e_2\otimes e_3+e_3\otimes e_1\right),\\
	&& V_{\textbf{\underline{3}}\otimes \textbf{\underline{3}}^{\ast}\rightarrow \textbf{\underline{1}}_{(2,2)}}=\mathrm{Span}\left(e_1\otimes e_3+\omega e_2\otimes e_1+e_3\otimes e_2\right).
	\end{eqnarray}
	\end{subequations}
Since the defining representations of $\mbox{\textlbrackdbl}27,4\mbox{\textrbrackdbl}$ and $\Delta(27)$ differ by phase factors only,
	 \begin{equation}
	 \lang e^{2\pi i/9}R,S\rang\cong\Delta(27),
	 \end{equation}
all Clebsch-Gordan coefficients for corresponding tensor product decompositions are equal.
\\
The structure of $\mbox{\textlbrackdbl}27,4\mbox{\textrbrackdbl}\cong\zed_9\rtimes\zed_3$ is very similar to the structure of $\Delta(27)\cong (\zed_3\times\zed_3)\rtimes\zed_3$. Though these groups are not isomorphic they share the nine one-dimensional irreducible representations as well as the character table (except for the values of $\mathrm{ord}(C_k)$). The character table of $\Delta(27)$ can be found in \cite{Principal_series,Delta(27)b}. Since the character tables of $\mbox{\textlbrackdbl}27,4\mbox{\textrbrackdbl}$ and $\Delta(27)$ are equal, all tensor products are equal. Since also the Clebsch-Gordan coefficients are equal we find that, from the point of view of model building, $\Delta(27)$ and $\mbox{\textlbrackdbl}27,4\mbox{\textrbrackdbl}$ are equivalent.

\subsection{The group $S_4(2)\cong A_4\rtimes\zed_4$}

The group $S_4\cong A_4\rtimes\zed_2$ has been commonly used in model building, and especially in the last years interest in $S_4$ began to increase \cite{S4,s4-papers,s4papers-2,s4-papers-3}. Therefore the group $S_4(2)$, which is a relative of $S_4$, may be of interest. From subsection \ref{groupseriesS4m} we know the structure
	\begin{equation}
	S_4(2)\cong A_4\rtimes\zed_4 
	\end{equation}
and generators of a faithful three-dimensional irreducible representation of $S_4(2)$:
	\begin{equation}
	 A:=\mbox{\begin{small}$\left(
	\begin{matrix}
	   1 & 0 & 0 \\
	   0 & -1 & 0 \\
	   0 & 0 & -1
	\end{matrix}
	\right)$\end{small}},\quad
	B:=\mbox{\begin{small}$\left(
	\begin{matrix}
	   0 & 0 & -1 \\
	   -1 & 0 & 0 \\
	   0 & 1 & 0
	\end{matrix}
	\right)$\end{small}},\quad
	C:=\mbox{\begin{small}$i\left(
	\begin{matrix}
	   1 & 0 & 0 \\
	   0 & 0 & 1 \\
	   0 & 1 & 0
	\end{matrix}
	\right)$\end{small}}.
	\end{equation}
$S_4$ possesses five conjugacy classes $C_i$. The classes of $S_4(2)$ differ from $C_i$ by the element $C^2=-\one_3$ only, thus one finds ten conjugacy classes of $S_4(2)$:
	\begin{equation}
	\begin{array}{l}
	C^1_{\pm} =\{\pm\one_3\},\\
	C^2_{\pm} =\{\pm A,\pm BAB^2,\pm B^2 AB\},\\
	C^3_{\pm} =\{\pm B,\pm AB,\pm BA,\pm ABA,\pm B^2,\pm BAB,\pm B^2 A,\pm AB^2\},\\
	C^4_{\pm} =\{\pm C,\pm B^2 C,\pm BABC,\pm BC,\pm AC,\pm ABAC\},\\
	C^5_{\pm} =\{\pm AB^2 C,\pm B^2 AC,\pm B^2 ABC,\pm ABC,\pm BAB^2 C,\pm BAC\}.
	\end{array}
	\end{equation}
The nontrivial normal subgroups of $S_4(2)$ are given by
	\begin{equation}
	\begin{array}{l}
	C^1_+\cup C^1_-\cong \zed_2,\\
	C^1_+\cup C^2_+\cong \zed_2\times\zed_2,\\
	C^1_+\cup C^1_-\cup C^2_+\cup C^2_-\cong\zed_2\times\zed_2\times\zed_2,\\
	C^1_+\cup C^2_+\cup C^3_+\cong A_4,\\
	C^1_+\cup C^1_-\cup C^2_+\cup C^2_-\cup C^3_+\cup C^3_-\cong A_4\times\zed_2.
	\end{array}
	\end{equation}
Since $S_4(2)/\{\one_3,-\one_3\}\cong(A_4\rtimes\zed_4)/\zed_2\cong A_4\rtimes\zed_2\cong S_4$ we find that all irreducible representations of $S_4$ are irreducible representations of $S_4(2)$ too. By construction $A_4$ is an invariant subgroup of $S_4(2)$, thus all irreducible representations of $\zed_4\cong S_4(2)/A_4$ are irreducible representations of $S_4(2)$ too. They are given by:
	  \begin{subequations}
	  \begin{eqnarray}
	  && \textbf{\underline{1}}_1:\enspace A\mapsto 1,\enspace B\mapsto 1,\enspace C\mapsto 1,\\
	  && \textbf{\underline{1}}_2:\enspace A\mapsto 1,\enspace B\mapsto 1,\enspace C\mapsto -1,\\
	  && \textbf{\underline{1}}_3:\enspace A\mapsto 1,\enspace B\mapsto 1,\enspace C\mapsto i,\\
	  && \textbf{\underline{1}}_4:\enspace A\mapsto 1,\enspace B\mapsto 1,\enspace C\mapsto -i.
	  \end{eqnarray}
	  \end{subequations}
We will now construct the irreducible representations of $S_4$: Since $S_4/A_4\cong\zed_2$ we obtain two one-dimensional irreducible representations
	  \begin{subequations}
	  \begin{eqnarray}
	  && \textbf{\underline{1}}_1:\enspace A\mapsto 1,\enspace B\mapsto 1,\enspace C\mapsto 1,\\
	  && \textbf{\underline{1}}_2:\enspace A\mapsto 1,\enspace B\mapsto 1,\enspace C\mapsto -1,
	  \end{eqnarray}
	  \end{subequations}
which are irreducible representations of $\zed_4$ also. Multiplying these one-dimensional representations with the defining representation of $S_4$ we obtain the two three-dimensional irreducible representations of $S_4$:
	 \begin{subequations}
	  \begin{eqnarray}
	  && \textbf{\underline{3}}_1:\enspace A\mapsto A,\enspace B\mapsto B,\enspace C\mapsto -iC,\\
	  && \textbf{\underline{3}}_2:\enspace A\mapsto A,\enspace B\mapsto B,\enspace C\mapsto iC.
	  \end{eqnarray}
	  \end{subequations}
The missing two-dimensional irreducible representation can be obtained in the following way: The Klein four-group $\zed_2\times\zed_2$ is an invariant subgroup of $S_4$:
	\begin{equation}\label{Klein-fourgroup}
	C^1_+\cup C^2_{+}=\{\one_3, A, BAB^2, B^2 AB\}\cong \zed_2\times \zed_2.
	\end{equation}
Therefore all irreducible representations of $S_3\cong S_4/(\zed_2\times\zed_2)$ are irreducible representations of $S_4$ too. This has also been pointed out in \cite{s4papers-2}. Assuming that $\zed_2\times\zed_2$ given in equation (\ref{Klein-fourgroup}) is mapped onto $\one_3$ one can easily construct the three-dimensional \textit{reducible} $S_3$-representation
	\begin{equation}
	\textbf{\underline{3}}_r:\enspace A\mapsto \one_3,\enspace B\mapsto B,\enspace C\mapsto -iC.
	\end{equation}
$v=\frac{1}{\sqrt{3}}(1,-1,-1)^T$ is a common eigenvector of $B$ and $-iC$ to the eigenvalue 1. This enables reduction via
	\begin{displaymath}
	U:=\mbox{\begin{small}$\left(
	\begin{matrix}
	   \frac{1}{\sqrt{3}} & \sqrt{\frac{2}{3}} & 0 \\
	   -\frac{1}{\sqrt{3}} & \frac{1}{\sqrt{6}} & \frac{1}{\sqrt{2}} \\
	   -\frac{1}{\sqrt{3}} & \frac{1}{\sqrt{6}} & -\frac{1}{\sqrt{2}}
	\end{matrix}
	\right)$\end{small}}.
	\end{displaymath}
	\begin{displaymath}
	U^T BU=\mbox{\begin{small}$\left(
	\begin{matrix}
	   1 & 0 & 0 \\
	   0 & -\frac{1}{2} & \frac{\sqrt{3}}{2} \\
	   0 & -\frac{\sqrt{3}}{2} & -\frac{1}{2}
	\end{matrix}
	\right)$
	\end{small}},\quad U^T (-iC)U=\mbox{\begin{small}$\left(
	\begin{matrix}
	   1 & 0 & 0 \\
	   0 & 1 & 0 \\
	   0 & 0 & -1
	\end{matrix}
	\right)$\end{small}}.
	\end{displaymath}
Thus the two-dimensional irreducible representation of $S_4$ is given by
	\begin{equation}
	\textbf{\underline{2}}:\enspace A\mapsto \one_2,\enspace B\mapsto \mbox{\begin{small}$\left(
	\begin{matrix}
	   -\frac{1}{2} & \frac{\sqrt{3}}{2} \\
	   -\frac{\sqrt{3}}{2} & -\frac{1}{2}
	\end{matrix}
	\right)$
	\end{small}},\enspace -iC\mapsto\mbox{\begin{small}$\left(
	\begin{matrix}
	1 & 0\\
	0 & -1
	\end{matrix}
	\right)$
	\end{small}}.
	\end{equation} 
The irreducible representations of $S_4(2)$ are thus given by
	\begin{subequations}
	\begin{eqnarray}
	&& \textbf{\underline{1}}_1:\enspace A\mapsto 1,\enspace B\mapsto 1,\enspace C\mapsto 1,\\
	  && \textbf{\underline{1}}_2:\enspace A\mapsto 1,\enspace B\mapsto 1,\enspace C\mapsto -1,\\
	  && \textbf{\underline{1}}_3:\enspace A\mapsto 1,\enspace B\mapsto 1,\enspace C\mapsto i,\\
	  && \textbf{\underline{1}}_4:\enspace A\mapsto 1,\enspace B\mapsto 1,\enspace C\mapsto -i,\\
	&& \textbf{\underline{2}}_1:\enspace A\mapsto \one_2,\enspace B\mapsto \mbox{\begin{small}$\left(
	\begin{matrix}
	   -\frac{1}{2} & \frac{\sqrt{3}}{2} \\
	   -\frac{\sqrt{3}}{2} & -\frac{1}{2}
	\end{matrix}
	\right)$
	\end{small}},\enspace C\mapsto\mbox{\begin{small}$\left(
	\begin{matrix}
	1 & 0\\
	0 & -1
	\end{matrix}
	\right)$
	\end{small}},\\
	&& \textbf{\underline{2}}_2:\enspace A\mapsto \one_2,\enspace B\mapsto \mbox{\begin{small}$\left(
	\begin{matrix}
	   -\frac{1}{2} & \frac{\sqrt{3}}{2} \\
	   -\frac{\sqrt{3}}{2} & -\frac{1}{2}
	\end{matrix}
	\right)$
	\end{small}},\enspace C\mapsto\mbox{\begin{small}$i\left(
	\begin{matrix}
	1 & 0\\
	0 & -1
	\end{matrix}
	\right)$
	\end{small}},\\
	&& \textbf{\underline{3}}_1:\enspace A\mapsto A,\enspace B\mapsto B,\enspace C\mapsto -iC,\\
	  && \textbf{\underline{3}}_2:\enspace A\mapsto A,\enspace B\mapsto B,\enspace C\mapsto iC,\\
	&& \textbf{\underline{3}}_3:\enspace A\mapsto A,\enspace B\mapsto B,\enspace C\mapsto C,\\
	  && \textbf{\underline{3}}_4:\enspace A\mapsto A,\enspace B\mapsto B,\enspace C\mapsto -C.
	\end{eqnarray}
	\end{subequations}
The extension of this analysis to $S_4(m)$ is easy - one just needs to take the irreducible representations of $S_4(m)/A_4\cong\zed_{2^{m}}$ and multiply them with the irreducible representations of $S_4$ to obtain all irreducible representations of $S_4(m)$.
\medskip
\\
From this it is clear that also all tensor product decompositions and corresponding invariant subspaces have the same structure as those of $S_4$. All $3\otimes 3$-tensor products can be constructed from the $3\otimes 3$-tensor product
	\begin{equation}\label{CGCS4}
	\textbf{\underline{3}}_1\otimes \textbf{\underline{3}}_1=\textbf{\underline{1}}_1\oplus \textbf{\underline{2}}_1\oplus \textbf{\underline{3}}_1\oplus \textbf{\underline{3}}_2
	\end{equation}
of $S_4$ by multiplication with one-dimensional irreducible representations of $S_4(2)$. The corresponding invariant subspaces for the Clebsch-Gordan decomposition (\ref{CGCS4}) are spanned by the following vectors \cite{DTPOL}:
	\begin{subequations}
	\begin{eqnarray}
	&& v(\textbf{\underline{1}}_1)=\frac{1}{\sqrt{3}}\left( e_1\otimes e_1+ e_2\otimes e_2+e_3\otimes e_3\right), \\
	&& v_1(\textbf{\underline{2}}_1)=\frac{1}{\sqrt{6}}\left( -2 e_1\otimes e_1+e_2\otimes e_2+e_3\otimes e_3\right),\\
	&& v_2(\textbf{\underline{2}}_1)=\frac{1}{\sqrt{2}}\left(e_2\otimes e_2-e_3\otimes e_3 \right),\\
	&& v_1(\textbf{\underline{3}}_1)=\frac{1}{\sqrt{2}}\left(e_2\otimes e_3+e_3\otimes e_2 \right),\\
	&& v_2(\textbf{\underline{3}}_1)=\frac{1}{\sqrt{2}}\left(e_1\otimes e_3+e_3\otimes e_1 \right),\\
	&& v_3(\textbf{\underline{3}}_1)=\frac{1}{\sqrt{2}}\left(e_1\otimes e_2+e_2\otimes e_1 \right),\\
	&& v_1(\textbf{\underline{3}}_2)=\frac{1}{\sqrt{2}}\left(e_2\otimes e_3-e_3\otimes e_2 \right),\\
	&& v_2(\textbf{\underline{3}}_2)=\frac{1}{\sqrt{2}}\left(-e_1\otimes e_3+e_3\otimes e_1 \right),\\
	&& v_3(\textbf{\underline{3}}_2)=\frac{1}{\sqrt{2}}\left(e_1\otimes e_2-e_2\otimes e_1 \right).
	\end{eqnarray}
	\end{subequations}
Let us finally investigate the differences between the symmetry groups $S_4$ and $S_4(2)$ from the physical point of view.
\medskip
\\
Let us as an example consider a field theory describing seven real scalar fields arranged in the following $S_4(2)$-multiplets:
	\begin{equation}
	\textbf{\underline{3}}_1: \phi=\left(\begin{matrix} \phi_1 \\ \phi_2 \\ \phi_3 \end{matrix}\right),\quad \textbf{\underline{3}}_3: \psi=\left(\begin{matrix} \psi_1 \\ \psi_2 \\ \psi_3 \end{matrix}\right),\quad \textbf{\underline{1}}_3: \eta.
	\end{equation}
The Lagrangian
	\begin{equation}\label{Lagrangian1}
	\mathcal{L}_1=\phi^T \psi\eta + \eta^4
	\end{equation}
is invariant under this transformation, while
	\begin{equation}\label{Lagrangian2}
	\mathcal{L}_2=\eta^2
	\end{equation}
clearly is not. $\mathcal{L}_2$ can not be forbidden in a pure $S_4$-theory (allowing $\mathcal{L}_1$), because for this issue one needs the $\zed_4$-representation $\textbf{\underline{1}}_3$ of $S_4(2)$, which is not contained in $S_4$. Another group based on $S_4$ containing the needed $\zed_4$-representation is $S_4\times\zed_4$. The 20 irreducible representations of $S_4\times\zed_4$ are given by ($j=0,1,2,3$):
	\begin{subequations}\label{S4xZ4irreps}
	\begin{eqnarray}
	&& \textbf{\underline{1}}_{1j}:\enspace a\mapsto 1,\enspace b\mapsto 1,\enspace c\mapsto 1,\enspace d\mapsto i^j,\\
	  && \textbf{\underline{1}}_{2j}:\enspace a\mapsto 1,\enspace b\mapsto 1,\enspace c\mapsto -1,\enspace d\mapsto i^j,\\
	&& \textbf{\underline{2}}_{j}:\enspace a\mapsto \one_2,\enspace b\mapsto \mbox{\begin{small}$\left(
	\begin{matrix}
	   -\frac{1}{2} & \frac{\sqrt{3}}{2} \\
	   -\frac{\sqrt{3}}{2} & -\frac{1}{2}
	\end{matrix}
	\right)$
	\end{small}},\enspace c\mapsto\mbox{\begin{small}$\left(
	\begin{matrix}
	1 & 0\\
	0 & -1
	\end{matrix}
	\right)$
	\end{small}}, \enspace d\mapsto i^j\one_2,\\
	&& \textbf{\underline{3}}_{1j}:\enspace a\mapsto A,\enspace b\mapsto B,\enspace c\mapsto -iC,\enspace d\mapsto i^j\one_3\\
	  && \textbf{\underline{3}}_{2j}:\enspace a\mapsto A,\enspace b\mapsto B,\enspace c\mapsto iC,\enspace d\mapsto i^j\one_3.
	\end{eqnarray}
	\end{subequations}
From equations (\ref{S4xZ4irreps}) it is clear that all irreducible representations of $S_4\times\zed_4$ can be interpreted as irreducible representations of $S_4(2)$ with an additional generator $d$, which acts as multiplication with a phase factor. Indeed $S_4(2)$ is a subgroup of $S_4\times\zed_4$. It is thus clear that all Lagrangians based on a symmetry group $S_4\times\zed_4$ are allowed in the corresponding $S_4(2)$-theory too. The question remains whether there are $S_4(2)$-models which do not fit to an appropriate $S_4\times\zed_4$-model. The answer is no for the following reason: Consider a Lagrangian
	\begin{equation}\label{Lagrangian}
	\mathcal{L}(\phi_1,...,\phi_m)=\sum_{j} \mathcal{L}_j(\phi_1,...,\phi_m)
	\end{equation}
invariant under the action of a symmetry group $G$:
	\begin{equation}\label{Group_action}
	G\ni a:\enspace \phi_i\mapsto D_i(a)\phi_i\quad i=1,...,m,
	\end{equation}
where $D_j$ are representations of $G$. The Lagrangians $\mathcal{L}_j$ are assumed to fulfill the following properties:
	\begin{itemize}
	 \item $\mathcal{L}_j$ is invariant under the action (\ref{Group_action}) of $G$.
	 \item $\mathcal{L}_j$ can not be split up into two ``smaller'' Lagrangians being invariant under $G$ themselves\footnote{E.g. $\mathcal{L}(\eta)=\eta^2+\eta^4$ is invariant under $\zed_2:\enspace \eta\mapsto -\eta$, but it can be split up into the two ``smaller'' Lagrangians $\mathcal{L}_1(\eta)=\eta^2$ and $\mathcal{L}_2(\eta)=\eta^4$. $\nexists k\in\mathbbm{N}$ such that $\mathcal{L}(\alpha\eta)=\alpha^k\mathcal{L}(\eta) \enspace\forall\alpha\in U(1)$, while $\mathcal{L}_1(\alpha\eta)=\alpha^2\mathcal{L}_1(\eta)$ and $\mathcal{L}_2(\alpha\eta)=\alpha^4\mathcal{L}_2(\eta)\enspace\forall\alpha\in U(1)$.}. More precise: $\exists k\in\mathbbm{N}$ such that $\forall \alpha\in U(1):$ $\mathcal{L}_j(\alpha\phi_1,....,\alpha\phi_m)=\alpha^k\mathcal{L}_j(\phi_1,...,\phi_m)$.
	\end{itemize}
The construction of an invariant Lagrangian (\ref{Lagrangian}) can then be split up into two steps:
 	\begin{enumerate}
 	 \item $\mathcal{L}_j$ must transform as one-dimensional representations of $G$.
 	 \item If possible, the chosen representations $D_1,...,D_m$ have to be multiplied by one-dimensional representations in such a way that $\mathcal{L}_j$ are invariant under $G$. If this is not possible $\mathcal{L}_j$ is forbidden by the symmetry $G$.
 	\end{enumerate}
In this language the problem of the relation between $S_4(2)$ and $S_4\times\zed_4$ can be reformulated as follows: Suppose a Lagrangian $\mathcal{L}$ invariant under the action of $S_4(2)$ is given. Since the irreducible representations of $S_4(2)$ and $S_4\times\zed_4$ differ by phase factors only, we find that point 1. stated above is fulfilled automatically. We can now replace any irreducible representation $D_i$ of $S_4(2)$ containing elements of the form ``real matrix times $\pm i$'' by the corresponding irreducible representation of $S_4\times\zed_4$ containing all four elements of the center. In this case all Lagrangians $\mathcal{L}_j$ will remain invariant, because in order to construct $\mathcal{L}_j$ invariant under $S_4(2)$ one already had to take care of the phase factor $i$ contained in the element $C$ of $S_4(2)$. Thus from the point of view of invariant Lagrangians (which is the interesting point of view for physics), $S_4\times\zed_4$ and $S_4(2)$ are equivalent.
\bigskip
\\
In this section we have analysed two interesting finite subgroups of $U(3)$. It turned out that in the case of these two groups it is possible to find (in the case of $\mbox{\textlbrackdbl}27,4\mbox{\textrbrackdbl}$) a finite subgroup of $SU(3)$ or (in the case of $S_4(2)$) a direct product of a finite subgroup of $SU(3)$ with a cyclic group which is equivalent to the $U(3)$-subgroup from the physical point of view, i.e. which allows the same Lagrangians. The question remains whether this is true in general.

\section{Conclusions}
In this work we used the SmallGroups Library \cite{SmallGroups,SmallGroupsLibrary} to find the finite subgroups of $U(3)$ of order smaller than 512. Using the computer algebra system GAP it was possible to construct generators for all these groups.
\medskip
\\
Inspired by the results (see tables  \ref{SU3-subgroups} and \ref{U3-subgroups}) of this analysis we developed the two theorems \ref{GroupSeriesTheorem} and \ref{GroupCenterTheorem} which led to the discovery of the series of finite subgroups of $U(3)$
	\begin{displaymath}
	T_n(m),\quad \Delta(3n^2,m),\quad S_4(m),\quad \Delta(6n^2,m)\quad\mbox{and}\quad \Delta'(6n^2,j,k).
	\end{displaymath}
In the last part of this work we analysed the groups $\mbox{\textlbrackdbl}27,4\mbox{\textrbrackdbl}\cong\zed_9\rtimes\zed_3$ and $S_4(2)\cong A_4\rtimes\zed_4$ in more detail. It turned out that, from the physical point of view, $\mbox{\textlbrackdbl}27,4\mbox{\textrbrackdbl}$ is equivalent to the $SU(3)$-subgroup $\Delta(27)$ and $S_4(2)$ is equivalent to $S_4\times\zed_4$. We closed our discussion with the open question whether this scheme holds true for all finite subgroups of $U(3)$.
\medskip
\\
We hope that this work will shed some light onto the structures of the finite subgroups of $U(3)$, which may be as important in the context of particle physics as the well known finite subgroups of $SU(3)$. 

\subsection*{Acknowledgment}
The author wants to thank Walter Grimus for valuable discussions.

\begin{appendix}

\section{Proofs}

\subsection{Proof of criterion \ref{criterion1}}\label{appendixA}

\begin{proof}
Claim 1: The identity element is the only element mapped onto $\mathbbm{1}_d$ by $D$ $\Leftrightarrow$ $D$ is faithful.
\medskip
\\
$\Leftarrow$: By definition of ``faithful''.
\medskip
\\
$\Rightarrow$: Suppose the identity $e$ is the only element mapped on $\mathbbm{1}_d$ and let $D(a)=D(b)$ for some $a,b\in G$.
	\begin{displaymath}
	D(a)=D(b)\Rightarrow\mathbbm{1}_d=D(a)D(b)^{-1}=D(ab^{-1})
	\Rightarrow ab^{-1}=e\Rightarrow a=b\Rightarrow D\mbox{ injective.}
	\end{displaymath}
$D$ injective $\Rightarrow$ $D$ faithful.
\medskip
\\
Claim 2: Let $M$ be equivalent to a unitary $m\times m$-matrix. Then $\mathrm{Tr}M=m\Leftrightarrow M=\mathbbm{1}_m$.
\medskip
\\
$\Leftarrow$: $\mathrm{Tr}\mathbbm{1}_m=m$.
\medskip
\\
$\Rightarrow$: All eigenvalues $e_j$ of $M$ are elements of $U(1)$, thus
		\begin{displaymath}
                \begin{split}
                  & \left|\mathrm{Tr}M\right|=\left|\sum_{j=1}^m e_j\right|\le \sum_{j=1}^m|e_j|=m.\\
		  & \left|\mathrm{Tr}M\right|=m\Leftrightarrow e_j=\lambda\in U(1)\enspace\forall j \Leftrightarrow M=\lambda \mathbbm{1}_m.\\
		  & \mathrm{Tr}M=\lambda m=m\Rightarrow \lambda=1\Rightarrow M=\mathbbm{1}_m.
                \end{split}
                \end{displaymath}
After all we find: If $D$ is non-faithful there must be more than one element mapped onto $\mathbbm{1}_d$, which is equivalent to the fact that there is more than one character $d$ of $D$ in the character table.
\end{proof}

\subsection{Proof of theorem \ref{theorem1}}\label{appendixB}

\begin{prop}\label{Prop1}
Let $a,b\in\mathbbm{N}\backslash\{0\}$, and let $\mathrm{gcd}(a,b)$ be the greatest common divisor of $a$ and $b$. Then
	\begin{equation}
	\mathcal{Z}_a\cap\mathcal{Z}_b=\{e\}\Leftrightarrow\enspace \mathrm{gcd}(a,b)=1.
	\end{equation}
\end{prop}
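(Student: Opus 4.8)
The plan is to realize both cyclic groups concretely as groups of roots of unity inside $U(1)$, since in the setting of Theorem~\ref{theorem1} the relevant copies of $\mathcal{Z}_a$ and $\mathcal{Z}_b$ arise as scalar matrices and hence sit inside a common abelian group, so that their intersection is meaningful. Writing $\mathcal{Z}_a=\{z\in\mathbb{C}\,|\,z^a=1\}$ and $\mathcal{Z}_b=\{z\in\mathbb{C}\,|\,z^b=1\}$, an element of $\mathcal{Z}_a\cap\mathcal{Z}_b$ is precisely a $z$ with $z^a=z^b=1$. The whole argument then rests on Bézout's identity, which produces integers $x,y$ with $xa+yb=\mathrm{gcd}(a,b)$.

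For the direction $\Leftarrow$, I would assume $\mathrm{gcd}(a,b)=1$ and take any $z\in\mathcal{Z}_a\cap\mathcal{Z}_b$. Then
\begin{equation*}
z=z^{1}=z^{xa+yb}=(z^a)^x(z^b)^y=1^x\cdot 1^y=1,
\end{equation*}
so the intersection is trivial. For $\Rightarrow$ I would argue by contraposition: if $d:=\mathrm{gcd}(a,b)>1$, then since $d\mid a$ and $d\mid b$ the number $\zeta:=e^{2\pi i/d}$ satisfies $\zeta^a=\zeta^b=1$ and $\zeta\neq 1$, so $\zeta$ is a nontrivial common element and $\mathcal{Z}_a\cap\mathcal{Z}_b\neq\{e\}$.

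More conceptually, the same Bézout computation shows that every common element is a $d$-th root of unity, while conversely every $d$-th root of unity lies in both groups; hence $\mathcal{Z}_a\cap\mathcal{Z}_b$ is exactly the group of $d$-th roots of unity, of order $d$. The claimed equivalence is then immediate, since this order equals $1$ if and only if $d=\mathrm{gcd}(a,b)=1$.

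The argument contains no real obstacle: the only point requiring care is the implicit assumption that $\mathcal{Z}_a$ and $\mathcal{Z}_b$ are regarded as subgroups of one ambient group (here $U(1)$, via scalar matrices), so that $\mathcal{Z}_a\cap\mathcal{Z}_b$ denotes a genuine intersection rather than a comparison of abstract isomorphism types. Once this identification is fixed, the statement follows directly from Bézout's identity.
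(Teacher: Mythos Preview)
Your proof is correct and a bit cleaner than the paper's. Both realize $\mathcal{Z}_a$ and $\mathcal{Z}_b$ as roots of unity inside $U(1)$, so the intersection is genuinely meaningful; where you invoke B\'ezout's identity to kill any common element in one line, the paper instead argues directly with exponents: for $\Leftarrow$ it writes $g=e^{2\pi i y/a}=e^{2\pi i x/b}$, deduces $yb=(x+kb)a$, and uses $\mathrm{gcd}(a,b)=1$ (in effect Euclid's lemma) to force $a\mid y$; for $\Rightarrow$ it produces a nontrivial common element via $\mathrm{lcm}(a,b)=ax=by<ab$ rather than simply exhibiting $e^{2\pi i/d}$ as you do. Your B\'ezout argument is shorter and immediately gives the stronger statement $\mathcal{Z}_a\cap\mathcal{Z}_b\cong\mathcal{Z}_{\mathrm{gcd}(a,b)}$, while the paper's version avoids quoting B\'ezout and stays at the level of elementary manipulations with $\mathrm{lcm}$ and divisibility.
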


\begin{proof} In this proof we represent $\mathcal{Z}_k$ as $\mathcal{Z}_k=\{1,\kappa,\kappa^2,...,\kappa^{k-1}\}$, where $\kappa=e^{\frac{2\pi i}{k}}$.
\begin{itemize}
 \item[$\Rightarrow$:] $\mathcal{Z}_a\cap \mathcal{Z}_b=\{1\}$. Suppose $\mathrm{gcd}(a,b)>1$, and let $\mathrm{lcm}(a,b)$ denote the lowest common multiple of $a$ and $b$.
	\begin{displaymath}
	\Rightarrow \mathrm{lcm}(a,b)=\frac{ab}{\mathrm{gcd}(a,b)}<ab.
	\end{displaymath}
 $\Rightarrow$ $\exists$ $(x,y)\in\{1,...,a-1\}\times\{1,...,b-1\}$ such that
	\begin{displaymath}
	\begin{split}
	\mathrm{lcm}(a,b)=ax=by &\Rightarrow \frac{y}{a}=\frac{x}{b}=\frac{\mathrm{lcm}(a,b)}{ab}=\frac{1}{\mathrm{gcd}(a,b)}<1\Rightarrow\\
		&\Rightarrow e^{2\pi i\frac{y}{a}}=e^{2\pi i\frac{x}{b}}\Rightarrow\\
		&\Rightarrow\underbrace{\left(e^{2\pi i/a}\right)^y=\left(e^{2\pi i/b}\right)^x}_{\in\enspace \mathcal{Z}_a\cap \mathcal{Z}_b}\neq 1.\Rightarrow\mbox{ contradiction!}
	\end{split}
	\end{displaymath}
 \item[$\Leftarrow$:] $\mathrm{gcd}(a,b)=1$. Let $g\in \mathcal{Z}_a\cap \mathcal{Z}_b$.
	\begin{displaymath}
	\begin{split}
	&\Rightarrow \exists (x,y)\in (\mathbbm{N}\backslash\{0\})\times(\mathbbm{N}\backslash\{0\})\enspace\mbox{such that}\\
	&\phantom{\Rightarrow} \enspace g=\left(e^{2\pi i/a}\right)^y=\left(e^{2\pi i/b}\right)^x\Rightarrow\\
	&\Rightarrow \frac{y}{a}=\frac{x}{b}+k,\enspace k\in\mathbbm{N}\Rightarrow yb=xa+kab=(x+kb)a\Rightarrow\\
	&\Rightarrow y\mbox{ is a multiple of }a\mbox{ (because $\mathrm{gcd}(a,b)=1$)}\Rightarrow g=1.
	\end{split}
	\end{displaymath}
\end{itemize}
\end{proof}

\begin{cor}\label{Cor1}
Let $a,b\in\mathbbm{N}\backslash\{0\}$, then
	\begin{equation}
	\zed_a\times\zed_b\cong\zed_{ab}\Leftrightarrow\mathrm{gcd}(a,b)=1.
	\end{equation}
\end{cor}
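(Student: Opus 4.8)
The plan is to reduce the claimed isomorphism to a statement about the orders of elements, and to use Proposition \ref{Prop1} to dispatch the nontrivial direction by realizing both cyclic groups concretely as groups of roots of unity in $U(1)$. The starting observation is that $\zed_a\times\zed_b$ and $\zed_{ab}$ have the same cardinality $ab$, so they are isomorphic if and only if $\zed_a\times\zed_b$ is cyclic, i.e.\ if and only if it contains an element of order $ab$. Since the order of an element $(x,y)\in\zed_a\times\zed_b$ equals $\mathrm{lcm}(\mathrm{ord}(x),\mathrm{ord}(y))$ and hence divides $\mathrm{lcm}(a,b)$, and since $\mathrm{lcm}(a,b)=ab/\mathrm{gcd}(a,b)$, the group $\zed_a\times\zed_b$ possesses an element of order $ab$ precisely when $\mathrm{gcd}(a,b)=1$. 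This already isolates the essential arithmetic content of the corollary.

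To carry out the direction $\Leftarrow$ in the spirit of Proposition \ref{Prop1}, I would assume $\mathrm{gcd}(a,b)=1$ and view $\zed_a,\zed_b$ as the subgroups $\{\kappa^j\}$ of $U(1)$ as in the proof of the proposition. By Proposition \ref{Prop1} we then have $\zed_a\cap\zed_b=\{1\}$. Because $U(1)$ is abelian, the multiplication map $\mu\colon\zed_a\times\zed_b\to U(1)$, $\mu(x,y)=xy$, is a homomorphism whose kernel is $\{(x,x^{-1})\mid x\in\zed_a\cap\zed_b\}=\{(1,1)\}$, so $\mu$ is injective. Its image $\zed_a\zed_b$ is a finite subgroup of $U(1)$ of order $ab$; since every element of $\zed_a$ and of $\zed_b$ is an $(ab)$-th root of unity, this image is contained in $\zed_{ab}$, and comparing cardinalities forces $\zed_a\zed_b=\zed_{ab}$. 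Hence $\zed_a\times\zed_b\cong\zed_{ab}$.

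For the direction $\Rightarrow$ I would argue by contraposition: if $\mathrm{gcd}(a,b)=d>1$, then $\mathrm{lcm}(a,b)=ab/d<ab$, so by the order computation above every element of $\zed_a\times\zed_b$ has order strictly less than $ab$. Thus $\zed_a\times\zed_b$ has no generator and cannot be cyclic, so it is not isomorphic to $\zed_{ab}$. The only points requiring any care are the elementary facts that $\mu$ is a homomorphism with the stated kernel and that a subgroup of $\zed_{ab}$ of order $ab$ is all of $\zed_{ab}$; both are routine, the first relying on commutativity of $U(1)$ and the second on equality of finite orders. I do not expect a genuine obstacle here — the work is entirely in translating the arithmetic condition $\mathrm{gcd}(a,b)=1$ into the existence of an element of maximal order, which Proposition \ref{Prop1} makes transparent.
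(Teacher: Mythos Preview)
Your proof is correct and follows essentially the same route as the paper: realize $\zed_a$ and $\zed_b$ as roots of unity inside $\zed_{ab}\subset U(1)$, invoke Proposition~\ref{Prop1} for the intersection, and read off the direct-product decomposition. Your multiplication map $\mu$ is exactly the internal direct product the paper uses implicitly; the only difference is that for the direction $\Rightarrow$ you spell out the element-order argument ($\mathrm{ord}(x,y)\mid\mathrm{lcm}(a,b)<ab$), whereas the paper simply asserts the biconditional $\zed_{ab}\cong\zed_a\times\zed_b\Leftrightarrow\zed_a\cap\zed_b=\{1\}$ as a consequence of the normal-subgroup setup and leaves that step to the reader.
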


\begin{proof}
$\alpha:=e^{\frac{2\pi i}{a}}$, $\beta:=e^{\frac{2\pi i}{b}}$, $\gamma:=e^{\frac{2\pi i}{ab}}$.
	\begin{displaymath}
	\zed_{ab}\cong \{1,\gamma,...,\gamma^{ab-1}\}.
	\end{displaymath}
$\zed_a\cong\{1,\alpha,...,\alpha^{a-1}\}$ and $\zed_b\cong \{1,\beta,...,\beta^{b-1}\}$ are normal subgroups of $\zed_{ab}$, thus
	\begin{displaymath}
	\zed_{ab}\cong\zed_a\times\zed_b\Leftrightarrow \zed_a\cap\zed_b=\{1\},
	\end{displaymath}
and from proposition \ref{Prop1}:
	\begin{displaymath}
	\mathcal{Z}_a\cap\mathcal{Z}_b=\{1\}\Leftrightarrow\enspace \mathrm{gcd}(a,b)=1.
	\end{displaymath}

\end{proof}
\hspace{0mm}
\\
\textit{Proof of theorem \ref{theorem1}.} Let us represent $\mathcal{Z}_n$ as $\mathcal{Z}_n=\{1,a,a^2,...,a^{n-1}\}$, where $a=e^{2\pi i/n}$. From proposition \ref{Prop1} we know that
	\begin{displaymath}
	\mathrm{gcd}(n,c)=1\Leftrightarrow \mathcal{Z}_n\cap\mathcal{Z}_c=\{e\}\Leftrightarrow
	\end{displaymath}
	\begin{equation}\label{constructionofZnxG}
	\begin{split}
	\mathcal{D}:\enspace &\mathcal{Z}_n\times G\rightarrow \mathcal{D}(\mathcal{Z}_n\times G)\\
	& (a^k,g)\mapsto a^k D(g),\quad k\in\{0,...,n-1\}
	\end{split}
	\end{equation}
is a faithful representation of $\mathcal{Z}_n\times G$. It remains to show the irreducibility of $\mathcal{D}$. Remember that a representation $R$ of $H$ is irreducible if and only if $(\chi_R,\chi_R)_H=1$ (see equation (\ref{SP-characters}) and explanations there).

	\begin{displaymath}
	\begin{split}
	(\chi_{\mathcal{D}},\chi_{\mathcal{D}})_{\mathcal{Z}_n\times G} &=\frac{1}{\mathrm{ord}(\mathcal{Z}_n\times G)}\sum_{b\in\mathcal{Z}_n\times G}\chi_{\mathcal{D}}(b)^{\ast}\chi_{\mathcal{D}}(b)=\\
	&=\frac{1}{\mathrm{ord}(G)}\times\frac{1}{n}\sum_{k=0}^{n-1}\sum_{b'\in G}\chi_{\mathcal{D}}(a^k b')^{\ast}\chi_{\mathcal{D}}(a^k b')=\\
	&=\frac{1}{\mathrm{ord}(G)}\times\frac{1}{n}\times n\sum_{b'\in G}\chi_{D}(b')^{\ast}\chi_{D}(b')=(\chi_D,\chi_D)_G=1.
	\end{split}
	\end{displaymath}
$\Rightarrow$ $\mathcal{D}$ is a faithful $m$-dimensional irreducible representation of $\mathcal{Z}_n \times G$.

\begin{flushright}
$\Box$\end{flushright}

\subsection{Proofs of theorem \ref{GroupSeriesTheorem} and theorem \ref{GroupCenterTheorem}}\label{appendixC}

\begin{lemma}\label{lemma1}
Let $A:=\lang A_1,...,A_a\rang$ be a normal subgroup of $G:=\lang A_1,...,A_a,B\rang$, then
	\begin{equation}
	\mathrm{ord}(G)\leq\mathrm{ord}(A)\mathrm{ord}(B).
	\end{equation}
\end{lemma}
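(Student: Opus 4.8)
The plan is to use the normality of $A$ to pass to the quotient group $G/A$, whose order can be controlled directly. Since $A$ is by hypothesis a normal subgroup of $G$, the canonical projection $\pi:G\rightarrow G/A$ is a well-defined surjective group homomorphism with kernel $A$. The generators $A_1,\dots,A_a$ all lie in $A$ and hence are mapped to the identity coset, so the image $\pi(G)=G/A$ is generated by the single element $\pi(B)$. Thus $G/A=\lang\pi(B)\rang$ is cyclic, and its order equals the order of the element $\pi(B)$.

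The next step is to bound $|G/A|$ by $\mathrm{ord}(B)$, where $\mathrm{ord}(B)$ denotes the order of the element $B$, i.e. the order of the cyclic group $\lang B\rang$. Because $\pi$ is a homomorphism, the order of $\pi(B)$ divides $\mathrm{ord}(B)$: from $B^{\mathrm{ord}(B)}=e$ we get $\pi(B)^{\mathrm{ord}(B)}=\pi(B^{\mathrm{ord}(B)})=\pi(e)$, the identity of $G/A$, so the order of $\pi(B)$ is a divisor of $\mathrm{ord}(B)$. Since both are positive integers, this yields
	\begin{equation}
	|G/A|=\mathrm{ord}(\pi(B))\leq\mathrm{ord}(B).
	\end{equation}

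Finally I would invoke the index identity $\mathrm{ord}(G)=\mathrm{ord}(A)\,|G/A|$, valid for the normal subgroup $A$ (Lagrange's theorem), to conclude
	\begin{equation}
	\mathrm{ord}(G)=\mathrm{ord}(A)\,|G/A|\leq\mathrm{ord}(A)\,\mathrm{ord}(B),
	\end{equation}
which is the claimed inequality. There is no serious obstacle in this argument; the only point requiring a little care is the interpretation of $\mathrm{ord}(B)$ as the order of the group element $B$, combined with the elementary fact that a group homomorphism cannot increase the order of an element. This is precisely what downgrades the equality $\mathrm{ord}(G)=\mathrm{ord}(A)\,|G/A|$ to the stated inequality, the gap being an equality exactly when $\lang B\rang$ intersects $A$ trivially.
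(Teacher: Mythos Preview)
Your argument is correct and is essentially the same as the paper's: both exploit normality of $A$ to write $G=A\lang B\rang$, so that $|G|\le |A|\cdot\mathrm{ord}(B)$. The paper phrases this as a direct element count (every element is an element of $A$ times a power of $B$), whereas you route it through the quotient $G/A=\lang\pi(B)\rang$ and Lagrange, but the content is identical.
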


\begin{proof}
$A$ is an invariant subgroup of $G$, thus every element of $G$ can be written as an element of $A$ times an element of $\lang B\rang$, i.e. a power of $B$. Thus there are at most $\mathrm{ord}(A)\mathrm{ord}(B)$ different elements in $G$.
\end{proof}
\hspace{0mm}\\
\textit{Proof of theorem \ref{GroupCenterTheorem}}.
Every element of $\lang A_1,...,A_a,B\rang$ can be written as a product of generators of the group. Let
	\begin{equation}\label{defP}
	P(A_1,...,A_a,B)=\one_m.
	\end{equation}
be a representation of the unit element in terms of generators. The number $x[P]$ of factors $B$ contained in the product $P$ can be defined by
	\begin{equation}
	P(A_1,...,A_a,\delta B)=\delta^{x[P]} P(A_1,...,A_a,B)
	\end{equation}
for some $\delta\in U(1),\enspace \delta^n\neq 1\enspace\forall n\in\mathbbm{N}\backslash\{0\}$. Let $M$ be the set of all products $P$ fulfilling equation (\ref{defP}). We can now define $s$ to be the smallest positive number of factors $B$ contained in a product (\ref{defP}), i.e.
	\begin{equation}
	s:=\min_{P\in M} \{x[P]|x>0\}.
	\end{equation}
Since $B^n=\one_m$ we find $s\leq n$. Let $\tilde{P}$ denote a product of generators fulfilling
	\begin{equation}
	\tilde{P}(A_1,...,A_a,\delta B)=\delta^s \tilde{P}(A_1,...,A_a,B).
	\end{equation}
Suppose $s<n$: The center of the group is generated by $e^{2\pi i/c}\one_m$, $(\beta B)^{n^k}=e^{2\pi i/c^j}\one_m$ and $\beta^s\one_m=\tilde{P}(A_1,...,A_a,\beta\one_m)$.
	\begin{displaymath}
	(\beta^{s})^{c^j}=\left(e^{2\pi i/n^k}\right)^s,\mbox{ which (if $s<n$) generates }\zed_{n^k}. 
	\end{displaymath}
Thus we find
	\begin{displaymath}
	\lang A_1,...,A_a,\beta B\rang=\lang A_1,...,A_a,B,e^{2\pi i/c^j}\one_m,e^{2\pi i/n^k}\one_m\rang
	\end{displaymath}
and since $\lang e^{2\pi i/c}\one_m\rang\subset\lang A_1,...,A_a\rang$:
	\begin{equation}\label{equ1}
	\mathrm{ord}(\lang A_1,...,A_a,\beta B\rang)=\mathrm{ord}(\lang A_1,...,A_a,B\rang)\times c^{j-1}n^k=c^{j-1}n^{k+1}\mathrm{ord}(H).
	\end{equation}
On the other hand we know that
	\begin{displaymath}
	\lang A_1,...,A_a,\beta B\rang=\lang A_1,...,A_a,e^{2\pi i/c^j}\one_m,XB\rang
	\end{displaymath}
	for some $X\in\lang e^{2\pi i/n^k}\one_m\rang$. Using lemma \ref{lemma1} this leads to
	\begin{displaymath}
	\mathrm{ord}(\lang A_1,...,A_a,\beta B\rang)\leq \underbrace{\mathrm{ord}(\lang A_1,...,A_a,e^{2\pi i/c^j}\one_m\rang)}_{c^{j-1}\mathrm{ord}(H)}\times\mathrm{ord}(\lang XB\rang)\leq c^{j-1}n^k\mathrm{ord}(H),
	\end{displaymath}
which is a contradiction to equation (\ref{equ1}) $\Rightarrow s=n$.
\medskip
\\
Since $s=n$ the center of the group is given by
	\begin{displaymath}
	\lang e^{2\pi i/c^j}\one_m, e^{2\pi i/n^{k-1}}\one_m\rang\cong\zed_{c^j n^{k-1}}\quad\quad\mbox{for $j>0,\enspace k>0$}.
	\end{displaymath}
The other cases follow immediately noticing that (by definition) $\lang e^{2\pi i/c}\one_m\rang$ always is a subgroup of the center.
\medskip
\\
Let us finally consider the case $c=1,\enspace k>0$. In this case the group
	\begin{displaymath}
	\lang A_1,...,A_a,\beta B\rang,\quad\quad \beta=e^{2\pi i/n^k}
	\end{displaymath}
(by construction) is the semidirect product
	\begin{displaymath}
	\lang A_1,...,A_a\rang\rtimes\lang \beta B\rang\cong H\rtimes\zed_{n^k}
	\end{displaymath}
with center $\lang e^{2\pi i/n^{k-1}}\one_m\rang\cong \zed_{n^{k-1}}$.
\begin{flushright}
$\Box$\end{flushright}

\begin{lemma}\label{lemma2}
Let $n,q\in\mathbbm{N}\backslash\{0\}$ and $\mathrm{gcd}(n,q)=1$. Then
	\begin{equation}
	\exists p\in \{1,...,n-1\}: (pq)\hspace{1mm}\mathrm{mod}\hspace{1mm}n=1.
	\end{equation}
\end{lemma}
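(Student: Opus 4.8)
The plan is to recognize this statement as nothing more than the existence of a multiplicative inverse of $q$ modulo $n$, restricted to the residue range $\{1,\dots,n-1\}$. Since the lemma will be applied in a setting where $n$ is prime, I would work under the harmless assumption $n\geq 2$ (for $n=1$ the index set $\{1,\dots,n-1\}$ is empty and the claim is vacuous/false, so it is implicitly excluded). The cleanest route avoids invoking any heavy machinery and instead establishes that multiplication by $q$ permutes the residues mod $n$.

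Concretely, first I would consider the map
	\begin{equation}
	\phi:\{0,1,\dots,n-1\}\rightarrow\{0,1,\dots,n-1\},\quad \phi(p)=(pq)\bmod n.
	\end{equation}
The key step is to show $\phi$ is injective: if $\phi(p_1)=\phi(p_2)$ then $n\mid (p_1-p_2)q$, and because $\mathrm{gcd}(n,q)=1$ this forces $n\mid(p_1-p_2)$, so $p_1=p_2$ in the given range. An injective self-map of a finite set is surjective, hence $1$ lies in the image and there is some $p$ with $(pq)\bmod n=1$. Finally I would observe that this $p$ cannot be $0$, since $\phi(0)=0\neq 1$ (using $n\geq 2$), so in fact $p\in\{1,\dots,n-1\}$ as required.

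The only real obstacle is the implication ``$\mathrm{gcd}(n,q)=1$ and $n\mid kq$ $\Rightarrow$ $n\mid k$'', i.e.\ the coprimality-to-divisibility step underlying injectivity; everything else is bookkeeping. If one prefers an even more explicit route, the same conclusion follows from B\'ezout's identity: $\mathrm{gcd}(n,q)=1$ yields integers $x,y$ with $xn+yq=1$, whence $yq\equiv 1\pmod n$, and reducing $y$ modulo $n$ produces the desired $p$ (again nonzero because $0\cdot q\not\equiv 1$). I would present the bijection argument as the main proof, as it matches the elementary, self-contained style used for Proposition~\ref{Prop1}.
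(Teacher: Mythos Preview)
Your proof is correct and follows essentially the same route as the paper: both arguments show that multiplication by $q$ acts injectively on residues modulo $n$ (using $\gcd(n,q)=1$ to pass from $n\mid kq$ to $n\mid k$), deduce surjectivity by finiteness, and conclude that $1$ is hit by some $p\in\{1,\dots,n-1\}$. The only cosmetic difference is that the paper works directly with the list $q,2q,\dots,(n-1)q$ modulo $n$ rather than phrasing it as a self-map on $\{0,\dots,n-1\}$.
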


\begin{proof}
Consider the numbers
	\begin{displaymath}
	q\hspace{1mm}\mathrm{mod}\hspace{1mm}n,\enspace (2q)\hspace{1mm}\mathrm{mod}\hspace{1mm}n,\enspace...\enspace,\enspace [(n-1)q]\hspace{1mm}\mathrm{mod}\hspace{1mm}n.
	\end{displaymath}
Suppose 
	\begin{displaymath}
	(k_1 q)\mathrm{mod}\hspace{1mm}n=(k_2 q)\mathrm{mod}\hspace{1mm}n,\quad\quad k_1,k_2\in\{1,...,n-1\},\quad k_1>k_2.
	\end{displaymath}
This implies
	\begin{displaymath}
	[\underbrace{(k_1-k_2)}_{\in \{1,...,n-2\}} q]\hspace{1mm}\mathrm{mod}\hspace{1mm}n=0\Rightarrow q\hspace{1mm}\mathrm{mod}\hspace{1mm}n=0\Rightarrow\mbox{ contradiction to }\mathrm{gcd}(n,q)=1.
	\end{displaymath}
$\Rightarrow$ The $n-1$ numbers
	\begin{displaymath}
	q\hspace{1mm}\mathrm{mod}\hspace{1mm}n,\enspace (2q)\hspace{1mm}\mathrm{mod}\hspace{1mm}n,\enspace...\enspace,\enspace [(n-1)q]\hspace{1mm}\mathrm{mod}\hspace{1mm}n
	\end{displaymath}
are different elements of $\{1,...,n-1\}$. $\Rightarrow$ One of them must be $1$.
\end{proof}
\hspace{0mm}
\\
\textit{Proof of theorem \ref{GroupSeriesTheorem}.}
Let $b=qc^j n^k$; $\enspace j,k\in\mathbbm{N}$; $\enspace\mathrm{gcd}(q,n)=\mathrm{gcd}(q,c)=1$.
\medskip
\\
From lemma \ref{lemma2} we know that
	\begin{displaymath}
	\exists p\in\{1,...,n-1\}: (pq)\hspace{1mm}\mathrm{mod}\hspace{1mm}n=1.
	\end{displaymath}
	Then
	\begin{displaymath}
	(\beta B)^{pq}=\beta^{pq}B^{pq}=\beta^{pq}B
	\end{displaymath}
	and we can write $\beta B$ as a product of the two group elements $(\beta^{pq}B)^{-1}\beta B=\beta^{1-pq}\one_m$ and $\beta^{pq}B$. Therefore
	\begin{displaymath}
	\lang A_1,...,A_a,\beta B\rang= \lang A_1,...,A_a,\beta^{pq}B,\beta^{1-pq}\one_m\rang.
	\end{displaymath}
	$(\beta^{1-pq})^{c^j n^k}=e^{2\pi i/q}$, thus $\beta^{1-pq}\one_m$ generates $\zed_{rq}\cong \zed_{r}\times\zed_q$, where $r$ contains factors $n$ and $c$ only.
	\begin{displaymath}
	\begin{split}
	\Rightarrow\lang A_1,...,A_a,\beta B\rang & =\lang A_1,...,A_a, \beta^{pq}B,e^{2\pi i/r}\one_m,e^{2\pi i/q}\one_m\rang\cong\\
	&\cong\lang A_1,...,A_a,\beta^{pq} B,e^{2\pi i/r}\one_m\rang\times \lang e^{2\pi i/q}\one_m\rang\cong\\
	&\cong\lang A_1,...,A_a,\beta^{pq} B,e^{2\pi i/r}\one_m\rang\times\zed_q.
	\end{split}
	\end{displaymath}
	$\Rightarrow$ If we want that $\lang A_1,...,A_a,\beta B\rang$ can not be written as a direct product with a cyclic group we must impose $q=1$, thus
	\begin{displaymath}
	b=c^j n^k,\quad j,k\in\mathbbm{N}.
	\end{displaymath}
It remains to show that for $b=c^j n^k$ $\lang A_1,...,A_a,\beta B\rang$ can not be written as a direct product with a cyclic group.
\medskip
\\
Let from now on $\beta:=e^{2\pi i/(c^j n^k)}$. Suppose
	\begin{displaymath}
	\lang A_1,...,A_a,\beta B\rang=X\times Y,
	\end{displaymath}
where $Y$ is a cyclic group. Because of irreducibility $Y$ must be a subgroup of the center $C$ of the group. In the following we will frequently use the fact that $X\cap Y=\{\one_m\}$ in $X\times Y$.
\medskip
\\
Let us first consider the case $j>0,\enspace k>1$. From theorem \ref{GroupCenterTheorem} we know that
	\begin{displaymath}
	C=\lang e^{2\pi i/c^j}\one_m,e^{2\pi i/n^{k-1}}\one_m\rang\cong \zed_{c^j}\times\zed_{n^{k-1}}.
	\end{displaymath}
Since every element of $X\times Y$ can be uniquely written as a product of an element of $Y$ and an element of $X$ it follows that
	\begin{displaymath}
	\exists\enspace\alpha\one_m\in Y:\quad \alpha\beta B\in X.
	\end{displaymath}
	\begin{displaymath}
	\Rightarrow (\alpha\beta B)^{n^{k-1}}=\alpha^{n^{k-1}}e^{2\pi i/(c^j n)}\one_m\in X.
	\end{displaymath}
Since $Y\subset C\cong \zed_{c^j}\times\zed_{n^{k-1}}$ and $\alpha\one_m\in Y$ we find that $\alpha^{n^{k-1}}\one_m\in\lang e^{2\pi i/c^j}\one_m\rang$, which implies
	\begin{displaymath}
	(\alpha\beta B)^{c^j n^{k-1}}=(\alpha^{n^{k-1}}e^{2\pi i/(c^j n)})^{c^j}\one_m=e^{2\pi i/n}\one_m\in X,
	\end{displaymath}
thus $Y\cap \lang e^{2\pi i/n}\one_m\rang=\{\one_m\}\Rightarrow Y$ is a subgroup of $\lang e^{2\pi i/c^j}\one_m\rang$. In the cases of $j>0,\enspace k\in\{0,1\}$ we find $Y\subset \lang e^{2\pi i/c^j}\one_m\rang$ too. The case of $c=1$ directly leads to $Y=\{\one_m\}$.
\medskip
\\
Knowing that $Y\subset \lang e^{2\pi i/c^j}\one_m\rang$ we can deduce that (if $Y$ is nontrivial)
	\begin{displaymath}
	\lang e^{2\pi i/c}\one_m\rang\subset Y\Rightarrow \lang e^{2\pi i/c^j}\one_m\rang\cap X=\{\one_m\},
	\end{displaymath}
because else we would find $X\cap Y\neq\{\one_m\}$. This leads to
	\begin{displaymath}
	Y=\lang e^{2\pi i/c^j}\one_m\rang.
	\end{displaymath}
Thus every element of $\lang A_1,...,A_a,\beta B\rang$ can be uniquely written as a product of an element of $X$ and an element of $Y=\lang e^{2\pi i/c^j}\one_m\rang$. This implies that every element of $\lang A_1,...,A_a\rang$ can be uniquely written as an element of some subgroup $S\subset X$ and an element of $Y=\lang e^{2\pi i/c^j}\one_m\rang$.
	\begin{displaymath}
	\begin{split}
	& \Rightarrow \lang A_1,...,A_a\rang=S\times\lang e^{2\pi i/c}\one_m\rang\Rightarrow\\ & \Rightarrow G=\lang A_1,...,A_a,B\rang\cong(S\times\lang e^{2\pi i/c}\one_m\rang)\rtimes\zed_n\cong(S\rtimes\zed_n)\times\zed_c,
	\end{split}
	\end{displaymath}
which is a contradiction to ``$G\cong H\rtimes\zed_n$ can not be written as a direct product with a cyclic group''.
\medskip
\\
The case $j=0$, in a similar way, leads to $Y=\lang e^{2\pi i/c}\one_m\rang$ leading to the same contradiction as above.

\begin{flushright}
$\Box$\end{flushright}

\end{appendix}

\end{document}